\newcommand{\Oh}{\mathcal{O}}
\newcommand{\N}{\mathbb{N}}
\newcommand{\NP}{\textsf{NP}\xspace}
\newcommand{\PP}{\textsf{P}\xspace}
\newcommand{\MWIS}{\textsc{MWIS}\xspace}
\renewcommand{\leq}{\leqslant}
\renewcommand{\geq}{\geqslant}
\renewcommand{\epsilon}{\varepsilon}
\newlength{\RoundedBoxWidth}
\newsavebox{\GrayRoundedBox}
\newenvironment{GrayBox}[1]%
   {\setlength{\RoundedBoxWidth}{.93\columnwidth}
    \def\boxheading{#1}
    \begin{lrbox}{\GrayRoundedBox}
       \begin{minipage}{\RoundedBoxWidth}}%
   {   \end{minipage}
    \end{lrbox}
    \begin{center}
    \begin{tikzpicture}%
       \node(Text)[draw=black!20,fill=white,rounded corners,inner sep=2ex,text width=\RoundedBoxWidth]
             {\usebox{\GrayRoundedBox}};
        \coordinate(x) at (current bounding box.north west);
        \node [draw=white,rectangle,inner sep=3pt,anchor=north west,fill=white]
        at ($(x)+(6pt,.75em)$) {\boxheading};
    \end{tikzpicture}
    \end{center}}
\newenvironment{defproblemx}[1]{\noindent\ignorespaces%
                                \FrameSep=6pt%
                                \parindent=0pt%
%                 \vspace*{-1em}
                \begin{GrayBox}{#1}%
                \begin{tabular*}{\columnwidth}{!{\extracolsep{\fill}}@{\hspace{.1em}} >{\itshape} p{1.5cm} p{0.86\columnwidth} @{}}%
            }{
                \end{tabular*}%
                \end{GrayBox}%
                \ignorespacesafterend
%                 \vspace*{-1em}
            }
\newcommand{\problemTask}[3]{%
  \begin{defproblemx}{#1}
    Input: & #2 \\
    Task: & #3
  \end{defproblemx}
}
\newtheorem{theorem}{Theorem}
\newcommand{\newtheoremwithcrefformat}[2]{%
  \newtheorem{#1}[theorem]{#2}%
  \crefformat{#1}{##2\MakeUppercase#1~##1##3}%
  \Crefformat{#1}{##2\MakeUppercase#1~##1##3}%
}
\newcommand{\newseptheoremwithcrefformat}[2]{%
  \newtheorem{#1}{#2}%
  \crefformat{#1}{##2\MakeUppercase#1~##1##3}%
  \Crefformat{#1}{##2\MakeUppercase#1~##1##3}%
}
\theoremstyle{nonumberplain}
\newtheorem{proof}{Proof.}
\newtheorem{claimproof}{Proof of Claim.}
\def\cqedsymbol{\ifmmode$\lrcorner$\else{\unskip\nobreak\hfil
\penalty50\hskip1em\null\nobreak\hfil$\lrcorner$
\parfillskip=0pt\finalhyphendemerits=0\endgraf}\fi}
\title{
Max Weight Independent Set in sparse graphs with no long claws\footnote{Parts of this work appeared in the proceedings of SODA 2021~\cite{ACDR21} and STACS 2024~\cite{stacspaper}}}
\author{
   Tara Abrishami\thanks{Department of Mathematics, University of Hamburg, Germany, \texttt{tara.abrishami@uni-hamburg.de}. Supported by the National Science Foundation Award Number DMS-2303251 and the Alexander von Humboldt Foundation. This work was performed in part while the author was at Princeton University and supported by NSF Grant DMS-1763817 and NSF-EPSRC Grant DMS-2120644.} \and
  Maria Chudnovsky\thanks{Department of Mathematics, Princeton University, USA, \texttt{mchudnov@math.princeton.edu}. Supported by NSF-EPSRC Grant DMS-2120644 and by AFOSR grant FA9550-22-1-008.} \and
  Cemil Dibek\thanks{Department of Operations Research and Financial Engineering, Princeton University, USA, \texttt{cdibek@alumni.princeton.edu}. Supported by the National Science Foundation Award Number DMS-1763817.} \and
  Marcin Pilipczuk\thanks{Institute of Informatics, University of Warsaw, Poland, \texttt{m.pilipczuk@mimuw.edu.pl}. Supported by Polish National Science Centre SONATA BIS-12 grant number 2022/46/E/ST6/00143.} \and
  Pawe\l{} Rz\k{a}\.{z}ewski\thanks{Faculty of Mathematics and Information Science, Warsaw University of Technology and Institute of Informatics, University of Warsaw, Poland, \texttt{pawel.rzazewski@pw.edu.pl}. 
    This work is a part of the project BOBR that has received funding from the European Research Council (ERC) under the European Union's Horizon 2020 research and innovation programme (grant agreement No.~948057)}}
\date{}
\newcommand{\weight}{\mathbf{w}}
\newcommand{\maxdeg}{\Delta}
\begin{document}

\maketitle

\begin{abstract}
For graphs $G$ and $H$, we say that $G$ is $H$-free if it does not contain $H$ as an induced subgraph.
Already in the early 1980s Alekseev observed that the \textsc{Max Weight Independent Set} problem (\textsc{MWIS}) remains \textsf{NP}-hard in $H$-free graphs, unless every component of $H$ is a path or a subdivided claw, i.e., a graph obtained from the three-leaf star by subdividing each edge some number of times (possibly zero). Since then determining the complexity of \textsc{MWIS} in these remaining cases is one of the most important problems in algorithmic graph theory.

In this paper we make an important step towards solving the problem by providing a polynomial-time algorithm for \textsc{MWIS} in graphs excluding a fixed graph forest of paths and subdivided claws as an induced subgraph, and a fixed biclique as a subgraph.

\end{abstract}

%\newpage

\section{Introduction}\label{sec:intro}
A \emph{vertex-weighted graph} is an undirected graph $G$ equipped with a weight function $\weight : V(G) \to \N$.
For $X \subseteq V(G)$, we use $\weight(X)$ as a shorthand for $\sum_{x \in X} \weight(x)$ and for a subgraph $H$ of $G$,
$\weight(H)$ is a shorthand for $\weight(V(H))$.  By convention we use $\weight(\emptyset)=0$.
Throughout the paper we assume that arithmetic operations on weights are performed in unit time.

For a graph $G$, a set $I \subseteq V(G)$ is \emph{independent} or \emph{stable} if there is no edge of $G$ with both endpoints in $I$. By $\alpha(G)$ we denote the number of vertices in a largest independent set in $G$.
In the \textsc{Max Independent Set} (\textsc{MIS}) problem we are given an undirected graph $G$, and ask for an independent set of size $\alpha(G)$.
In the \textsc{Max Weight Independent Set} (\textsc{MWIS}) problem we are given an undirected vertex-weighted graph $(G,\weight)$, and ask for a maximum-weight independent set in $(G,\weight)$. Note that \textsc{MIS} is a special case of \textsc{MWIS} where all weightes are equal. By $n$ we always denote the number of vertices of the instance graph.

\textsc{MIS} (and \textsc{MWIS} as its generalization) is a ``canonical'' hard problem: It was one of the first problems shown to be \NP-hard~\cite{Karp1972}, it is notoriously hard to approximate~\cite{Hastad96cliqueis,Khot2006}, and it is \textsc{W}[1]-hard~\cite{Cyganetal}. Many of these hardness results hold even if we restrict input instances to some natural graph classes~\cite{GAREY1976237,DBLP:conf/isaac/BonnetBTW19,DBLP:conf/wg/DvorakFRR20}.
Thus a very natural research direction is to consider restricted instances and try to capture the boundary between ``easy'' and ``hard'' cases.

\paragraph*{State of the art.}
The study of the complexity of \MWIS in restricted graph classes is a central topic in algorithmic graph theory~\cite{butterfly,GROTSCHEL1984325,DBLP:conf/soda/Yolov18,DBLP:journals/corr/abs-1912-11246,DBLP:journals/dam/Alekseev03,DBLP:conf/wg/BergougnouxKR23}.
Particular attention is given to classes that are \emph{hereditary}, i.e., closed under vertex deletion.
Among such classes a special role is played by ones defined by forbidding certain substructures. For graphs $G$ and $H$, we say that $G$ is \emph{$H$-free} if it does not contain $H$ as an \emph{induced subgraph}.

In what follows, for $t \geq 1$, by $P_t$ we denote the $t$-vertex path.
For integers $a,b,c \geq 1$, by $S_{a,b,c}$ we denote the graph obtained from the three-leaf star (i.e., the \emph{claw}) by subdividing the three edges $a-1$, $b-1$, and $c-1$ times, respectively.
Alternatively, we may think of $S_{a,b,c}$ as the graph obtained from paths $P_{a+1}, P_{b+1}$, and $P_{c+1}$ by identifying one endvertex of each path.
By $d S_{a,b,c}$ we denote the graph with $d$ components, each of which is isomorphic to $S_{a,b,c}$.

Let $\mathcal{S}$ be the family of subcubic forests $H$ whose every component has at most one vertex of degree 3,
i.e., whose every component is either a path or a subdivided claw.

The complexity study of \textsc{MWIS} in $H$-free graphs dates back to the early 1980s and the work of Alekseev~\cite{alekseev1982effect},
who observed that for most graphs $H$ the problem remains \NP-hard.
Indeed, let us discuss the hard cases.
First, \textsc{MIS} (and thus \textsc{MWIS}) is \NP-hard in subcubic graphs~\cite{GAREY1976237},
which are $H$-free whenever $H$ has a vertex of degree at least 4.
For the remaining cases we will use the so-called \emph{Poljak construction}~\cite{Po74}:
If $G'$ is obtained from $G$ by subdividing one edge twice, then $\alpha(G') = \alpha(G)+1$.
Thus, if $G^p$ denotes the graph obtained from $G$ by subdividing each edge exactly $2p$ times,
then $\alpha(G^p) = \alpha(G) + p \cdot |E(G)|$.
Now observe that if $H$ has a cycle or two vertices of degree three in one component,
then $G^{|V(H)|}$ is $H$-free. Consequently, for such graph $H$, \textsc{MIS} is \NP-hard in $H$-free graphs.
Let us point out that the above hardness reductions imply that \textsc{MIS} cannot even be solved in subexponential time unless the Exponential-Time Hypothesis (ETH) fails.

Summing up, the only graphs $H$ for which we may hope for a polynomial-time (or even subexponential-time) algorithm for \textsc{MWIS} in $H$-free graphs are precisely the graphs in $\mathcal{S}$.

The complexity of \textsc{MWIS} in $H$-free graphs when $H \in \mathcal{S}$ remains one of the most challenging and important problems in algorithmic graph theory. Despite significant attention received from the graph theory and the theoretical computer science communities, only partial results are known. Let us discuss them.

First, consider the case that $H = P_t$ for some $t$.
Since $P_4$-free graphs, also known as \emph{cographs}, have very rigid structure (in particular, they have clique-width at most 2), the polynomial-time algorithm for this class of graphs is rather simple~\cite{CORNEIL1981163}.
However, already for $P_5$-free graphs the situation is much more complicated.
The existence of a polynomial-time algorithm for $P_5$-free graphs was a long-standing open problem that was finally resolved in the affirmative in 2014 by Lokshtanov, Vatshelle, and Villanger~\cite{DBLP:conf/soda/LokshantovVV14} using the framework of \emph{potential maximal cliques}.
Later, using the same approach but with significantly more technical effort, Grzesik, Klimo\v{s}ov\'a, Pilipczuk, and Pilipczuk~\cite{DBLP:journals/talg/GrzesikKPP22} obtained a polynomial-time algorithm for $P_6$-free graphs.
The case of $P_7$-free graphs remains open. 

However, some interesting algorithmic results can be obtained if we relax our notion of an efficient algorithm.
First, it was shown by Bacs\'o et al.~\cite{DBLP:journals/algorithmica/BacsoLMPTL19} that for every fixed $t$, \textsc{MWIS} can be solved in \emph{subexponential} time $2^{\Oh(\sqrt{n \log n})}$ for $P_t$-free graphs (by $n$ we always denote the number of vertices of the instance graph).
Another subexponential-time algorithm, with worse running time, was obtained independently by Brause~\cite{DBLP:journals/dam/Brause17}.
While these results do not rule out the possibility that the problem is \NP-hard,
let us recall that, assuming the ETH, subexponential algorithms for \textsc{MWIS} in $H$-free graphs cannot exist if $H \notin \mathcal{S}$.
Later, Chudnovsky et al.~\cite{DBLP:conf/soda/ChudnovskyPPT20,DBLP:journals/corr/abs-1907-04585} showed that for every fixed $t$, the problem admits a QPTAS in $P_t$-free graphs.
Finally, a very recent breakthrough result by Gartland and Lokshtanov~\cite{GartlandL20} shows that for every fixed $t$, the problem can be solved in \emph{quasipolynomial time} $n^{\Oh(\log^3 n)}$; see also a slightly simpler algorithm by Pilipczuk, Pilipczuk, and Rz\k{a}\.zewski~\cite{DBLP:conf/sosa/PilipczukPR21} with running time $n^{\Oh(\log^2n)}$.
Note that this means that if for some $t$,\textsc{MWIS} is \NP-hard for $P_t$-free graphs, then \emph{all problems} in \NP can be solved in quasipolynomial time. While this does yet not imply that \PP = \NP, it still seems rather unlikely according to our current understanding of complexity theory.

Now let us turn to the case when $H$ is a subdivided claw.
The simplest subdivided claw is the claw $S_{1,1,1}=K_{1,3}$. Claw-free graphs appear to be closely related to \emph{line graphs}~\cite{Claw5} and thus a polynomial-time algorithm for \textsc{MWIS} in claw-free graphs can be obtained by a modification of the well-known augmenting path approach for finding a maximum-weight matching~\cite{SBIHI198053,MINTY1980284} (i.e., a maximum-weight independent set in a line graph).
Let us highlight the close relation of claw-free graphs and line graphs, as it will play an important role in our paper.
The next smallest subdivided claw is the \emph{fork}, i.e., $S_{2,1,1}$. A polynomial-time algorithm for \textsc{MIS} in fork-free graphs was obtained by Alekseev~\cite{ALEKSEEV20043}. Later it was extended to the \textsc{MWIS} problem by Lozin and Milani\v{c}~\cite{DBLP:journals/jda/LozinM08}. For disconnected $H$, it is known that \textsc{MWIS} is polynomial-time-solvable in $dS_{1,1,1}$-free graphs, for every fixed $d$~\cite{BrandstadtM18a}.
The existence of polynomial-time algorithms in the next simplest (connected) cases, i.e., $H = S_{3,1,1}$ and $H=S_{2,2,1}$, is wide open.

Again, some interesting results can be obtained if we look beyond polynomial-time algorithms.
Chudnovsky et al.~\cite{DBLP:conf/soda/ChudnovskyPPT20,DBLP:journals/corr/abs-1907-04585} proved that for every subdivided claw $H$, the \textsc{MWIS} problem in $H$-free graphs admits a QPTAS and a subexponential-time algorithm working in time $n^{\Oh(n^{8/9})}$. We point out that the arguments used for the case when $H$ is a subdivided claw are significantly more complicated and technically involved than their counterparts for $P_t$-free graphs.
These results were then simplified and improved by Majewski et al.~\cite{ICALP-qptas}: They obtained another (faster) QPTAS and a subexponential-time algorithm with running time $n^{\Oh(\sqrt{n} \log n)}$.
Finally, very recently, Gartland et al.~ \cite{arxiv-sttt-qpoly} announced a quasipolynomial-time algorithm for \textsc{MWIS} in $H$-free graphs for every $H \in \mathcal{S}$.

More tractability results can be obtained if we put some additional restrictions on the instance graph. In particular,
there is a long line of research concerning graphs excluding a fixed (but still small) path or a subdivided claw and, simultaneously, some other small graphs, see e.g.~\cite{DBLP:journals/endm/LeBS15,p71,p67,p65,p73,p72,DBLP:journals/gc/LozinMP14,DBLP:journals/jda/LozinMR15,DBLP:journals/tcs/HarutyunyanLLM20,Mosca99,Mosca09,Mosca13,Mosca21,DBLP:journals/tcs/BrandstadtM21}.
A slightly different direction was considered by Lozin, Milani\v{c}, and Purcell~\cite{DBLP:journals/gc/LozinMP14}, who proved that for every fixed $t$,\textsc{MWIS} is polynomial-time solvable in \emph{subcubic} $S_{t,t,1}$-free graphs. Later, Lozin, Monnot, and Ries~\cite{DBLP:journals/jda/LozinMR15} showed a polynomial time algorithm for subcubic $S_{2,2,2}$-free graphs. Finally, Harutyunya et al.~\cite{DBLP:journals/tcs/HarutyunyanLLM20} generalized both these results by providing a polynomial-time algorithm for subcubic $S_{2,t,t}$-free graphs, for any fixed $t$.

We remark that the case when $H$ is a subdivided claw (or, more precisely, is in $\mathcal{S}$ and contains at least one component which is not a path) is the only case where the restriction to bounded degree graphs leads to an interesting problem. Indeed, the already mentioned hardness reduction of Alekseev~\cite{alekseev1982effect} shows that if $H \notin \mathcal{S}$, then \textsc{MIS} is \NP-hard even in \emph{subcubic} $H$-free graphs. On the other hand, if $H$ is a forest of paths, then $H$-free graphs of bounded degree are of constant size and thus of little interest.

In this work, we continue and significantly extend the study of the complexity of \textsc{MWIS} in $H$-free graphs with additional restrictions, where $H \in \mathcal{S}$.

\paragraph*{Our results.}
As a warm-up, we present  a polynomial-time algorithm for $H$-free graphs of bounded degree, where $H \in \mathcal{S}$.

\begin{theorem}\label{thm:main-maxdeg}
There exists an algorithm that, 
given a vertex-weighted graph $(G,\weight)$ on $n$ vertices with maximum degree $\maxdeg$ and integers $d,t$
in time $2^{\Oh(dt \Delta ^2)}n^{\Oh(t\Delta^2)}$ either finds an induced $dS_{t,t,t}$ or the maximum possible weight of an independent set in $(G,\weight)$.
\end{theorem}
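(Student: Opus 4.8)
The plan is to follow the general ``branching on a well-chosen separator'' paradigm that underlies most recent \MWIS algorithms in classes with no long induced claws, but to exploit bounded degree to keep the recursion depth and separator size under control. The key structural fact we would like to use is that in a connected $S_{t,t,t}$-free graph of maximum degree $\maxdeg$, the ball of radius $r$ around any vertex has size bounded by a function of $t$, $\maxdeg$, and $r$ only up to the point where it becomes a dominating-type set: more precisely, either some ball $N^r[v]$ of bounded radius (roughly $r = \Oh_t(1)$) is small — of size $n^{o(1)}$, or better yet bounded — or we can extract a long induced path, and from a long induced path plus the branching at its endpoints reconstruct an induced $S_{t,t,t}$. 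I would first prove such a dichotomy as a separate lemma.

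First I would set up the recursion. At each node of the recursion tree we hold an induced subgraph $G'$ of $G$ together with a set $X$ of ``deleted'' vertices that are forbidden from the solution; we want the maximum weight independent set in $G'$. We pick a vertex $v$ and consider its closed neighborhood $N[v]$; guessing which vertex of $N[v]$ (if any) is in the optimum solution costs a factor $\maxdeg+1$ and removes a non-empty closed neighborhood from $G'$, but this alone does not bound the depth. Instead, the second step is to use the structural lemma: in each connected component $C$ of the current graph, either $C$ is already small enough to solve by brute force / known quasi-poly machinery, or $C$ contains a vertex $v$ whose radius-$\rho$ ball $B = N^\rho_C[v]$ (with $\rho = \Theta_t(1)$) is a balanced-ish separator, or has small boundary. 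We branch over all $2^{|B|}$ subsets of $B$ that could be the trace of the optimum independent set on $B$; crucially $|B| \le \maxdeg^{\rho+1} = \maxdeg^{\Oh_t(1)}$, so this branching factor is $2^{\maxdeg^{\Oh_t(1)}}$, which is fine as a multiplicative cost but we must ensure it is only incurred $\Oh_t(\maxdeg^2)$ times along any root-to-leaf path — that is where the exponent $\Delta^2$ in the statement should come from, so the separator choice must guarantee that each branching step makes ``quantifiable progress'' (e.g. halving the number of vertices, or the number of components of a certain type, a potential that starts at $n$ and whose logarithm we can afford to pay $\maxdeg^{\Oh(1)}$ times — wait, $\log n$ times $\maxdeg^{\Oh(1)}$ is the exponent, giving $n^{\maxdeg^{\Oh(1)}}$, matching the claim up to the precise power of $\maxdeg$).

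Third, correctness: once the trace of the optimum on the separator $B$ is guessed, the remaining vertices of $B$ and their neighborhoods are deleted, $G'$ breaks into the components ``inside'' and ``outside'' $B$, each of strictly smaller size, and we recurse independently and add up the weights; standard. The detection side is handled uniformly: whenever the structural lemma fails to produce a good separator, it hands us an induced $S_{t,t,t}$, which we output.

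The main obstacle will be the structural lemma — specifically, proving that in an $S_{t,t,t}$-free bounded-degree graph one can always find a bounded-radius ball that is a balanced separator (or has small boundary into each big component), with the right quantitative trade-off so that the recursion depth is $\Oh_t(\log n)$ rather than something larger. The natural approach is: take a BFS layering from an arbitrary vertex; consecutive layers $L_i, L_{i+1}, \dots, L_{i+\rho}$ of total small size would give a separator; if no such short run of thin layers exists then the layered structure is ``wide'' for a long stretch, and a wideness-for-$\Theta_t(1)$-consecutive-layers argument, combined with bounded degree (so ``wide'' layers are spread over many BFS-shortest-paths), lets us pull out three long induced paths from a common center — an induced $S_{t,t,t}$. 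Making the counting in this last step clean, and ensuring the radius $\rho$ and all thresholds depend only on $t$ (not on $n$ or $\maxdeg$), is the delicate part; I would isolate it as the technical heart of the paper and prove \Cref{thm:main-maxdeg} from it in a few lines as sketched above.
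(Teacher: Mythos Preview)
Your proposed structural lemma is false, and the approach cannot be salvaged along these lines. Take $H$ to be a $3$-regular expander on $m$ vertices and let $G = L(H)$ be its line graph. Then $G$ is claw-free (hence $S_{t,t,t}$-free for every $t \ge 1$), has maximum degree $4$, and has $3m/2$ vertices. A balanced vertex separator $S$ in $G$ is a set of edges of $H$ whose removal splits $E(H)$ into balanced parts; by expansion of $H$ this forces $|S| = \Omega(m) = \Omega(|V(G)|)$. So no bounded-radius ball, no short run of BFS layers, indeed no set of size $o(n)$, is a balanced separator. Your BFS-layer dichotomy fails on this example not because the thin-layer case is hard to analyse, but because the graph genuinely has neither a small balanced separator nor an induced $S_{t,t,t}$.

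This is exactly why the paper does \emph{not} look for balanced separators. It invokes \Cref{thm:ICALP:weight} instead: in any induced subgraph one obtains a set $X$ of $\Oh(t\log n)$ vertices such that $G'-N[X]$ admits an \emph{extended strip decomposition} whose every particle has at most half the vertices. Particles are not components and need not be separated by small cuts --- the line-graph-of-an-expander example sits entirely inside a single edge set $\eta(xy)$ --- but \Cref{lem:esd-border-mwis} shows that knowing the MWIS answer on each particle suffices to recover the global answer via a maximum-weight matching in an auxiliary graph. The $\Delta^2$ in the exponent then comes from a second idea you are also missing: rather than branching exhaustively on $N[X]$ (which would give $n^{\Oh(t\Delta\log n)}$), the paper passes to a \textsc{Border MWIS} formulation with terminal set $T = N(N[X])$ of size $\Oh(t\Delta^2\log n)$, tabulating answers against all subsets of $T$; alternating recursive calls that halve $|V(G')|$ with calls that halve $|T|$ keeps $|T| = \Oh(t\Delta^2\log n)$ throughout and the recursion depth at $\Oh(\log n)$.
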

Note that by picking appropriate $d$ and $t$, \cref{thm:main-maxdeg} yields a polynomial-time algorithm for \textsc{MWIS} for bounded-degree graphs excluding a fixed graph from $\mathcal{S}$ as an induced subgraph.

Then we proceed to the main result of the paper: we show that \MWIS remains polynomial-time solvable in $dS_{t,t,t}$-free graphs, even if instead of bounding the maximum degree, we forbid a fixed biclique \emph{as a subgraph}.

\begin{theorem}\label{thm:main-biclique}
For every fixed integers $d,t$, and $s$ there exists a polynomial-time algorithm that, given 
a vertex-weighted graph $(G,\weight)$ that does not contain $dS_{t,t,t}$ as an induced subgraph nor $K_{s,s}$ as a subgraph, returns the maximum possible weight of an independent set in $(G,\weight)$. 
\end{theorem}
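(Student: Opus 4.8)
The plan is to reduce the general case to the bounded-degree case of \cref{thm:main-maxdeg} by a recursive decomposition, with the absence of $K_{t,t}$ as a subgraph providing exactly the leverage needed to keep the recursion polynomial. Throughout, every graph produced will be (essentially) an induced subgraph of $G$, hence still $S_{t,t,t}$-free and $K_{t,t}$-subgraph-free, so the structural hypotheses are maintained down the recursion.

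The starting observation is that in our setting a vertex cannot have large degree without forcing extra structure. If $v$ has degree larger than a threshold depending only on $t$, then $G[N(v)]$ has no clique on $2t-1$ vertices (such a clique together with $v$ would contain $K_{t,t}$ as a subgraph), so by Ramsey's theorem $N(v)$ contains a large independent set. A vertex together with a large independent set inside its neighbourhood is precisely the configuration that the structure theory of $S_{t,t,t}$-free graphs exploits: one can compute, in polynomial time, either an induced $S_{t,t,t}$ (which cannot occur under the promise), or a bounded-size set $X$ for which $N[X]$ is a balanced separator, or a rigid extended strip decomposition all of whose particles are substantially smaller than $G$. I would make this the engine of the recursion: as long as some connected piece has maximum degree above the threshold, apply the engine and branch accordingly --- split along a clique cutset and recombine the optima by guessing the at most one vertex of the cutset used by the solution; or solve the sub-instances given by the particles of an extended strip decomposition via the standard reduction of \MWIS along such a decomposition (which expresses the optimum of $G$ through weighted \MWIS values on the particles); or use the balanced separator $N[X]$. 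Once every piece has maximum degree bounded in terms of $t$, finish with \cref{thm:main-maxdeg}. Organising the clique-cutset splits into an atom (tree) decomposition is where the known results concerning tree decompositions come in.

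The hard part will be making the recursion run in \emph{polynomial}, not merely quasi-polynomial, time. Exploiting the balanced-separator outcome $N[X]$ in the obvious way --- guessing the trace of the solution on $N[X]$ and recursing on the two sides --- costs $n^{\Oh(\log n)}$, since $N[X]$ may be large and one recurses $\Theta(\log n)$ times; this is exactly the behaviour of the existing quasi-polynomial algorithms. The role of the sparsity hypothesis must be to kill the extra $\log n$ factor: one has to show that, because $G$ has no $K_{t,t}$ subgraph, the separator (or each particle of the extended strip decomposition) can be taken so well structured --- for instance of bounded maximum degree, or of bounded maximum degree after deleting a bounded-size set --- that the resulting sub-instance is directly solvable by \cref{thm:main-maxdeg}, rather than being merely a smaller instance of the same problem. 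Pinning down this interaction between ``no long induced claw'' and ``no large biclique subgraph'', and then verifying that the total number of sub-instances generated stays polynomial (an extended strip decomposition produces one sub-instance per particle, so the particle sizes have to be controlled so that they do not blow up over the recursion), is the technical core; the clique-cutset recombination, the weighted \MWIS bookkeeping on the particles, and the bounded-degree base case are routine once \cref{thm:main-maxdeg} is available.
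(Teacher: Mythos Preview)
Your plan correctly identifies that the naive recursion via balanced separators $N[X]$ is only quasi-polynomial, and that the sparsity hypothesis must somehow rescue this. But your proposed rescue --- that the particles of the extended strip decomposition, or the graph after removing $N[X]$, should themselves have bounded degree (or bounded degree after deleting a bounded set), so that \cref{thm:main-maxdeg} applies directly --- is not what happens, and I do not see how to make it work. The particles are arbitrary induced subgraphs of $G$; nothing forces them to have bounded degree, and iterating your engine on a particle just reproduces the quasi-polynomial recursion you wanted to avoid. Similarly, your suggestion that the relevant tree decomposition is an atom decomposition arising from clique cutsets is off the mark: clique cutsets need not exist at all in these graphs.

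The missing structural ingredient is the combination of \cref{lem:lozinrazgon} (Lozin--Razgon: $S_{t,t,t}$-free plus no $K_{t,t}$ subgraph implies no large $k$-block) with \cref{thm:weissauer} (Wei{\ss}auer: no large block implies a tree decomposition of bounded adhesion whose every torso has only boundedly many high-degree vertices). This is where ``no biclique'' is actually cashed in, and it is much stronger than the Ramsey observation about a single neighbourhood. The paper then runs essentially the same Border-\textsc{MWIS} recursion as in the bounded-degree proof, but at each step first locates a balanced bag of this tree decomposition, exhaustively branches on its $\leq k$ high-degree vertices, and only then invokes \cref{thm:ICALP:weight} on the remainder. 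The point is that after killing those $k$ vertices the \emph{torso} has bounded degree, so $N[X]$ intersected with the bag, and hence the new terminal set $Z^J$, stays of size $\Oh_t(\log n)$; components hanging off the bag are handled via their bounded adhesions. This is what keeps the terminal count bounded and the recursion polynomial --- not any bounded-degree property of the particles themselves.
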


Let us remark that by the celebrated K\"{o}v\'ari-S\'os-Tur\'an theorem~\cite{Kovari1954}, classes that exclude $K_{s,s}$  as a subgraph capture all hereditary classes of \emph{sparse graphs}, where by ``sparse'' we mean that the graph has a subquadratic number of edges. 
Furthermore, by a simple Ramsey argument, for every positive integer $r$ there exists an integer $s$ such that
if $G$ is $K_r$-free and $K_{r,r}$-free then $G$ does not contain $K_{s,s}$ as a subgraph. 
Hence, equivalently, \cref{thm:main-biclique} yields a polynomial-time algorithm for \textsc{MWIS} for graphs that are simultaneously $H$-free (for some $H \in \mathcal{S}$), $K_r$-free, and $K_{r,r}$-free.

\paragraph*{Our techniques.}
As in the previous works~\cite{DBLP:conf/soda/ChudnovskyPPT20,arxiv-sttt-qpoly}, the crucial tool in handling $dS_{t,t,t}$-free graphs
is an \emph{extended strip decomposition}. 
Its technical definition can be found in preliminaries; for now, it suffices to say that
it is a wide generalization of the preimage graph of a line graph 
(recall that line graphs are $S_{1,1,1}$-free) that allows for recursion for the 
\textsc{MWIS} problem. An extended strip decomposition of a graph $G$ identifies some induced
subgraphs of $G$ as \emph{particles} and, knowing the maximum possible weight of an independent
set in each particle, one can compute
in polynomial time
the maximum possible weight of an independent set in $G$.
(We remark that this computation involves advanced combinatorial techniques as it relies
on a reduction to the maximum weight matching problem in an auxiliary graph.)
In other words, finding an extended strip decomposition with small particles 
compared to $|V(G)|$ is equally good for the \textsc{MWIS} problem as splitting the graph
into small connected components.

The starting point is the following theorem of~\cite{ICALP-qptas}.
\begin{theorem}[{\cite[Corollary 12]{ICALP-qptas}} in a semi-weighted setting]\label{thm:ICALP:weight}
There exists an algorithm that, given an $n$-vertex graph $G$ with a set $U \subseteq V(G)$
and integers $d,t$, in polynomial time outputs
either:
\begin{itemize}
\item an induced copy of $dS_{t,t,t}$ in $G$, or
\item a set $X$ of size at most $(d-1)(3t+1)+(11 \log n + 6)(t+1)$ and a rigid extended strip decomposition of
  $G - N[X]$ with every particle containing at most $|U|/2$ vertices of $U$.
\end{itemize}
\end{theorem}
(A rigid extended strip decomposition is an extended strip decomposition that does not  have some unnecessary empty sets. By $N[X]$ we denote the set consisting of $X$ and all vertices with a neighbor in $X$.)
Let us remark that the result stated in \cite[Theorem~2]{ICALP-qptas} is for unweighted graphs (i.e., $U=V(G)$ using the notation from \cref{thm:ICALP:weight}), but the statement of \cref{thm:ICALP:weight} can be easily derived from the proof, see also~\cite{arxiv-sttt-qpoly}.

Consider the setting of \cref{thm:main-maxdeg}, i.e., the graph $G$ has maximum degree
$\maxdeg$. Apply \cref{thm:ICALP:weight} to $G$ with $U=V(G)$.
If we get the first outcome, i.e., an induced $dS_{t,t,t}$ in $G$, we return it and terminate.
So assume that we get the second outcome, i.e., the set $X$.
Note that as $|X| = \Oh(dt + t\log n)$, we have $|N[X]| = \Oh(dt\maxdeg + t\maxdeg\log n)$. 
It is now tempting to exhaustively branch on $N[X]$ (i.e., guess the intersection of the sought
independent set with $N[X]$) and recurse on the particles of the extended strip decomposition
of $G-N[X]$. However, implementing this strategy directly gives quasipolynomial (in $n$) running
time bound of $n^{\Oh(dt\maxdeg + t \maxdeg \log n)}$, as the branching step 
yields up to $2^{|N[X]|} = 2^{\Oh(dt \Delta)} \cdot n^{\Oh(t \maxdeg)}$ subcases and the depth of the recursion
is $\Oh(\log n)$. 

Our main new idea now is to perform this branching lazily, by considering a more general 
\emph{border} version of the problem, where the input graph is additionally equipped
with a set of \emph{terminals} and we ask for a maximum weight of an independent set
for every possible behavior on the terminals.
\problemTask{\textsc{Border MWIS}}{A vertex-weighted graph $(G,\weight)$ with a set $T \subseteq V(G)$ of \emph{terminals}.}{Compute $f_{G,\weight,T} : 2^T \to \mathbb{N} \cup \{-\infty\}$ defined for every
  $I_T \subseteq T$ as \newline
 $ f_{G,\weight,T}(I_T) = \max \{ \weight(I)~|~I \subseteq V(G) \wedge I\mathrm{\ is\ independent} \wedge I \cap T = I_T\}.$
}

A similar application of a border version of the problem to postpone branching in recursion
appeared for example in the technique of recursive understanding~\cite{DBLP:conf/focs/KawarabayashiT11,DBLP:journals/siamcomp/ChitnisCHPP16}.

Let us return to our setting, where we have a set $X$ of size $\Oh(dt + t \log n)$
and an extended strip decomposition of $G-N[X]$ with particles of size at most half of the
size of $V(G)$. 
We would like to remove $N[X]$ from the graph, indicate $N(N[X])$ as terminals and solve
\textsc{Border MWIS} in $(G-N[X],\weight,T := N(N[X]))$ using the extended strip decomposition
for recursion. Note that, thanks to the bounded degree assumption,
the size of $T = N(N[X])$ is bounded by $\Oh(dt \maxdeg^2 + t \maxdeg^2 \log n)$.
    
This approach \emph{almost} works: the only problem is that, as the recursion progresses, 
the set of terminals accummulates and its size can grow beyond the initial $\Oh(dt \maxdeg^2 + t \maxdeg^2 \log n)$
bound. Luckily, this can be remedied in a standard way: we alternate recursive steps where
\cref{thm:ICALP:weight} is invoked with $U=V(G)$ with steps
where \cref{thm:ICALP:weight} is invoked with $U=T$.
In this manner, we can maintain a bound of $\Oh(dt \maxdeg^2 + t \maxdeg^2 \log n)$ on the number of terminals
in every recursive call.
Note that this bound also guarantees that the size of the domain of the
requested function $f_{G,\weight,T}$ is of size $2^{\Oh(dt \maxdeg^2)} n^{\Oh(t \maxdeg^2)}$, which is within
the promised time bound.

\medskip

Let us now move to the more general setting of \cref{thm:main-biclique}.
Here, the starting points are the recent results of
Wei{\ss}auer~\cite{DBLP:journals/siamdm/Weissauer19} and
Lozin and Razgon~\cite{DBLP:journals/ejc/LozinR22}
that show that in the $S_{t,t,t}$-free case, 
excluding a biclique as a subgraph is not that much different than bounding the maximum degree.

A \emph{$k$-block} in a graph is a set of $k$ vertices, no two of which can be separated by deleting fewer than $k$ vertices.
The following result was shown by Wei{\ss}auer (we refer to preliminaries for standard
    definitions of tree decompositions and torsos).

\begin{theorem}[Wei{\ss}auer~\cite{DBLP:journals/siamdm/Weissauer19}]\label{thm:weissauer}
Let $G$ be a graph and $k\geq 2$ be an integer.
If $G$ has no $(k+1)$-block, then $G$ admits a tree decomposition with adhesion less than $k$, in which every torso has at most $k$ vertices of degree larger than $2k(k-1)$.
\end{theorem}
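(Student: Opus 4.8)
The plan is to obtain the decomposition by a minimality argument and then study its torsos. First I would reduce to the case that $G$ is connected (handle components separately). Then, among all tree decompositions of $G$ of adhesion less than $k$, I would list the bag sizes of each in non-increasing order and fix a decomposition $(T,\mathcal{V})$ whose list is lexicographically smallest --- the lexicographic order on such lists is well-founded, so a minimum exists. One may additionally assume $(T,\mathcal{V})$ is reduced (no bag contained in an adjacent one), since contracting such a containment leaves all adhesion sets unchanged and only shortens the list; and, similarly, minimality forces every branch $G[W_s]$ (the union of the bags on one side of an edge $st$ of $T$) to be connected, for otherwise some bag on that side would meet two components of $G[W_s]$ and the branch could be replaced by one branch per component, shortening the list. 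In particular any two vertices of an adhesion set $V_s\cap V_t$ can be joined by a path inside $G[W_s]$.

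The first thing to prove is that no torso of $(T,\mathcal{V})$ admits a nontrivial small separation: for every node $t$ and every separation $(A,B)$ of the torso $H_t$ with $|A\cap B|<k$, one of $A,B$ is all of $V_t$. I would argue by contradiction: if $A\setminus B$ and $B\setminus A$ are both nonempty, then --- crucially --- each adhesion set $V_s\cap V_t$ is a clique of $H_t$ and hence lies entirely in $A$ or in $B$, so one can \emph{split} the node $t$ into two adjacent nodes carrying $A$ and $B$ (new adhesion $A\cap B$, of size $<k$) and reattach the branch at each former neighbour $s$ to whichever of the two sides contains $V_s\cap V_t$. This is again a tree decomposition of $G$ of adhesion less than $k$, and since $|A|,|B|<|V_t|$ it has a lexicographically smaller list of bag sizes, contradicting minimality. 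From this it follows that every torso $H_t$ with $|V_t|\geq k+1$ is $k$-connected, since any separator of size $<k$ would split off a nonempty part and yield such a nontrivial separation.

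The main work, and the step I expect to be the real obstacle, is to bound the number of high-degree vertices in each torso. Bags of size at most $k$ are trivial, so I would fix $t$ with $|V_t|\geq k+1$, so $H_t$ is $k$-connected, and suppose for contradiction that $H_t$ has $k+1$ vertices $Z$ each of degree more than $2k(k-1)$ in $H_t$; the goal is to show that $Z$ is a $(k+1)$-block of $G$. Given a putative separator $S$ of $G$ with $|S|<k$ separating $u,v\in Z$, I would project $S$ to $V_t$ and observe that the natural pair $\bigl((C_u\cap V_t)\cup S,\,V_t\setminus C_u\bigr)$, where $C_u$ is the component of $G-S$ containing $u$, is almost a separation of $H_t$ of order $<k$ between $u$ and $v$ --- which would contradict $k$-connectedness of $H_t$ --- except that it may split some adhesion cliques (the \emph{virtual} edges of $H_t$). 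The key leverage is that adhesion sets have fewer than $k$ vertices and that every virtual edge reroutes along a path through its branch $G[W_s]$ (connected, by the choice of decomposition): a branch whose interior avoids $S$ produces no real obstruction, and the branches whose interior meets $S$ number at most $|S|<k$, since their interiors are pairwise disjoint. Combining this with the degree bound --- $u$ and $v$ each keep more than $2k(k-1)$ neighbours in $H_t$, of which at most $k-1$ lie in $S$ and only those seen through the at most $k-1$ ``active'' branches are lost --- should let me repair the projection into a genuine separation of $H_t$ of order below $k$, the required contradiction. The hard part will be making this repair precise and confirming that $2k(k-1)$ is exactly the degree threshold that survives the bookkeeping; once that is done, $Z$ is a $(k+1)$-block of $G$, contradicting the hypothesis, so every torso has at most $k$ vertices of degree larger than $2k(k-1)$, completing the construction.
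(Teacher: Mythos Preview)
The paper does not prove this theorem: it is quoted from Wei{\ss}auer and used as a black box (with only the remark that his proof is algorithmic). So there is no in-paper argument to compare against; your proposal is in effect a reconstruction of Wei{\ss}auer's own proof, and the high-level outline --- take a tree decomposition of adhesion $<k$ that is lexicographically minimal in bag sizes; deduce from minimality that no torso has a nontrivial separation of order $<k$; then argue that $k{+}1$ high-degree torso vertices would form a $(k{+}1)$-block of $G$ --- is indeed his strategy.

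There is, however, a genuine gap in your final step, beyond the incomplete bookkeeping you already flag. First a small point: to certify that $Z$ is a $(k{+}1)$-block you must rule out separators $S$ with $|S|\leq k$ (equivalently $|S|<k+1$), not $|S|<k$; this shifts all your counts. More importantly, your stated plan is to ``repair the projection into a genuine separation of $H_t$ of order below $k$'' and then contradict the torso's $k$-connectedness. That does not go through. The pair $\bigl((C_u\cap V_t)\cup S,\,V_t\setminus C_u\bigr)$ indeed has overlap contained in $S\cap V_t$, but to make it a separation of $H_t$ you must absorb the adhesions of the branches whose interior meets $S$; there are up to $|S|\leq k$ such branches, each contributing up to $k-1$ vertices, so the repaired separator has size only bounded by $k(k-1)$, not by $k-1$. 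The minimality property you established says nothing about torso separations of order $\geq k$, so there is no mechanism to shrink this back below $k$. The threshold $2k(k-1)$ is not what survives a reduction to a $(<k)$-separator of $H_t$; rather, it is what guarantees that even after losing up to $k(k-1)$ neighbours to this projected separator, each of $u$ and $v$ still keeps more than $k(k-1)$ torso neighbours lying on its own side of $G-S$, and the contradiction is then extracted from that configuration by a different route (Wei{\ss}auer's argument here is more delicate than ``the torso is $k$-connected''). Your ingredients --- bad branches, adhesion bounds, the degree surplus --- are the right ones, but the closing mechanism you propose would fail; you should revisit how the degree surplus and the size-$k(k-1)$ projected separator are combined to reach the contradiction.
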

Even though the statement of the result in~\cite{DBLP:journals/siamdm/Weissauer19} is just existential, the proof actually yields a polynomial-time algorithm to compute such a tree decomposition.

It turns out that $dS_{t,t,t}$-free graphs with no large bicliques have no large blocks.
\begin{lemma}\label{lem:lozinrazgon}
For any $d,t$, and $s$ there exists $k$ such that the following holds.
Every $dS_{t,t,t}$-free graph with no subgraph isomorphic to $K_{s,s}$ has no $k$-block.
\end{lemma}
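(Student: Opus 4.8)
The plan is to prove the contrapositive: to define a function $k=k(t,s)$ and show that every graph $G$ with no $K_{s,s}$ subgraph that contains a $k$-block $B$ also contains an induced $S_{t,t,t}$. Since $B$ lies in one connected component, we may assume $G$ is connected. The first step is to \emph{sparsify the block}: $G[B]$ is an induced subgraph of $G$, hence has no $K_{s,s}$ subgraph, so by the K\"{o}v\'ari-S\'os-Tur\'an theorem it has only $\Oh_s(|B|^{2-1/s})$ edges, and therefore (by Tur\'an's theorem) it contains an independent set $B'$ with $|B'|=\Omega_s(|B|^{1/s})$. No two vertices of $B'\subseteq B$ can be separated in $G$ by fewer than $|B|\geq|B'|$ vertices, so $B'$ is still highly linked. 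Replacing $B$ by $B'$, we may thus assume from now on that $B$ is \emph{independent}, that $m:=|B|$ is as large as we wish (by choosing $k$ large), and that no two vertices of $B$ can be separated in $G$ by fewer than $m$ vertices --- equivalently, by Menger's theorem, that any two vertices of $B$ are joined by $m$ internally disjoint paths, each of length at least $2$.

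The core of the argument is to construct an induced $S_{t,t,t}$ greedily from this linkage. Fix a branch vertex $b_0\in B$ and grow three legs out of it, maintaining the invariant that the legs are chordless, internally disjoint, and pairwise anticomplete except at $b_0$, i.e.\ that their union is an induced $S_{a_1,a_2,a_3}$. A single extension step takes a leg currently ending at a block vertex $b^\ast$, picks a block vertex $b^{\ast\ast}$ not yet used (possible since only $\Oh(t)$ block vertices have been used so far), routes $m$ internally disjoint $b^\ast$--$b^{\ast\ast}$ paths, keeps one that avoids everything built so far, is anticomplete to it except at $b^\ast$, and is chordless, and appends it; as $b^\ast\not\sim b^{\ast\ast}$ this lengthens the leg by at least $2$. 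After $\Oh(t)$ such steps, distributed over the three legs, all three reach length at least $t$, yielding an induced $S_{t,t,t}$ and hence the desired contradiction. (Equivalently, one can first build a single long induced path by hopping through block vertices and then attach the third leg near its midpoint.)

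The delicate point at each extension step is that a usable continuation --- one that is chordless and meets the already-built structure $S$ only at $b^\ast$ --- survives the pruning, and this is exactly where the absence of $K_{s,s}$ is needed. Among the $m-\Oh(t)$ internally disjoint $b^\ast$--$b^{\ast\ast}$ paths that avoid $S$ entirely, one must discard those adjacent to the interior of $S$; the only obstruction to a clean one remaining is a small set of very high-degree vertices of $S$, each adjacent to almost all of these pairwise-disjoint paths, and the no-$K_{s,s}$ hypothesis is what one plays against to either route around such vertices or exhibit a forbidden biclique in their overlapping neighbourhoods. I expect this interplay to be the main obstacle: keeping the highly linked set productive after a bounded piece of the graph has been carved out, while preventing a few high-degree vertices of the partial structure from blocking every continuation, and accounting for how many block vertices each step consumes so that all three legs can be completed. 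Tracking the estimates through this part of the argument is what fixes the value of $k=k(t,s)$.
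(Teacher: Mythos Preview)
The paper does not give its own proof of this lemma; it is quoted from Lozin and Razgon and used as a black box (together with Wei{\ss}auer's theorem) to obtain Corollary~\ref{cor:tree-decomp}. So there is nothing in the paper to compare your argument against.

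As a standalone attempt, your reduction to a large independent subset $B'$ of the block via K\"ov\'ari--S\'os--Tur\'an and Tur\'an is correct. The greedy leg-growing scheme, however, has a real gap precisely where you flag it, and I do not see how to close it along the lines you suggest. Take the simplest case: $S$ consists of $b_0$ together with a single interior vertex $v$, and you try to start a second leg from $b^\ast=b_0$ towards a fresh $b^{\ast\ast}$. Menger gives $m$ internally disjoint $b^\ast$--$b^{\ast\ast}$ paths, all but $O(1)$ of which avoid $S$. But nothing forbids $v$ from having a neighbour on \emph{every one} of these paths: that only forces $\deg(v)\ge m-O(1)$, and $K_{s,s}$-freeness does not bound degrees. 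A single blocker cannot supply one side of a $K_{s,s}$; with $|S|=O(t)$ blockers you still cannot, since ``$v_i$ is adjacent to path $P_j$'' does not pin down a common neighbour on $P_j$, so the ``overlapping neighbourhoods'' you allude to need not overlap at all. ``Routing around'' $v$ is also not an available move here --- $v$ is part of the spider you are building, not a deletable obstacle, and rerouting through its neighbourhood does not produce a path anticomplete to $v$. So the mechanism you propose for extracting a forbidden biclique from a failed extension does not work as stated, and what you have is a plan with its central difficulty unresolved rather than a proof. If you want a complete argument, you should consult the Lozin--Razgon paper directly; their route is more global than a local greedy extension.
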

Let us remark that Lozin and Razgon~\cite{DBLP:journals/ejc/LozinR22} showed \cref{lem:lozinrazgon} 
for $S_{t,t,t}$-free graphs. However, an extension of their argument applies to $dS_{t,t,t}$-free graphs; we include it in  \Cref{app:lozinrazgon}.

Combining \cref{thm:weissauer} and \cref{lem:lozinrazgon} we immediately obtain the following.
\begin{corollary}\label{cor:tree-decomp}
For any $d,t$, and $s$ there exists $k$ such that the following holds.
Given a $dS_{t,t,t}$-free graph $G$ with no subgraph isomorphic to $K_{s,s}$,
in polynomial time one can compute a tree decomposition of $G$ with adhesion less than $k$, in which every torso has at most $k$ vertices of degree larger than $2k(k-1)$.
\end{corollary}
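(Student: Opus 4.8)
The plan is a direct composition of the two cited black boxes, \cref{thm:weissauer} and \cref{lem:lozinrazgon}, plus a small amount of parameter bookkeeping.

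\emph{Step 1: extract a block-free hypothesis.} First I would apply \cref{lem:lozinrazgon} with $s = t$ to obtain an integer $k_0$, depending only on $t$, such that every $S_{t,t,t}$-free graph with no subgraph isomorphic to $K_{t,t}$ has no $k_0$-block. Next I would record the (easy) monotonicity of this property in $k_0$: if $B$ were a $(k_0+1)$-block, then every $k_0$-element subset of $B$ would be a $k_0$-block, since separating two of its vertices requires deleting at least $k_0+1 > k_0$ vertices. Hence ``no $k_0$-block'' implies ``no $k$-block'' for every $k \geq k_0$, so we may freely enlarge $k_0$; in particular assume $k_0 \geq 3$ and put $k := k_0 - 1 \geq 2$. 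Thus the input graph $G$, being $S_{t,t,t}$-free with no $K_{t,t}$ subgraph, has no $(k+1)$-block.

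\emph{Step 2: feed it to Wei{\ss}auer.} Now I would apply \cref{thm:weissauer} to $G$ with this value of $k$. Since $k \geq 2$ and $G$ has no $(k+1)$-block, $G$ admits a tree decomposition with adhesion less than $k$ in which every torso has at most $k$ vertices of degree larger than $2k(k-1)$ — which is exactly the asserted conclusion, with the constant $k = k_0 - 1$ that we just fixed and that depends only on $t$. For the algorithmic part, note that $k$ is a fixed constant (we never need to know $k_0$ explicitly, nor to test $G$ for the presence of a $(k+1)$-block, since \cref{lem:lozinrazgon} guarantees its absence); and, as remarked immediately below \cref{thm:weissauer}, the proof of that theorem is constructive and produces such a tree decomposition in polynomial time.

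\emph{Where the difficulty lies.} Honestly, there is no real obstacle here: all the combinatorial weight sits inside the two quoted results. The only points that deserve a sentence of care are the off-by-one between the $k_0$-block statement of \cref{lem:lozinrazgon} and the $(k+1)$-block hypothesis of \cref{thm:weissauer}, dealt with by the monotonicity observation in Step~1, and the remark that enlarging $k$ only weakens the conclusion (a larger adhesion bound, a larger permitted number of high-degree torso vertices, and a larger degree threshold $2k(k-1)$), which is what makes it legitimate to report one common constant $k$ in the statement.
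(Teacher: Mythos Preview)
Your proposal is correct and matches the paper's approach exactly: the paper simply states that the corollary follows by combining \cref{thm:weissauer} and \cref{lem:lozinrazgon}, without spelling out any details. Your only addition is the harmless parameter bookkeeping (the monotonicity of ``no $k$-block'' in $k$ and the off-by-one adjustment), which is fine but could be shortened to ``take $k := \max(k_0,2)$ and note that no $k_0$-block implies no $(k+1)$-block.''
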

To prove \cref{thm:main-biclique} using \cref{cor:tree-decomp}
we need to carefully combine explicit branching on the (bounded number of) vertices of 
large degree in a single bag with --- as in the bounded degree case --- applying
\cref{thm:ICALP:weight} to the remainder of the graph and indicating
$N(N[X])$ as the terminal set of the border problem passed to the recursive
calls. 
Finally, we combine these steps with the information passed over
adhesions in the tree decomposition.

\section{Preliminaries}\label{sec:prelim}
\newcommand{\tree}{\mathcal{T}}
Our algorithms take a vertex-weighted graph $(G,\weight)$ as an input.
In the recursion, we will be working on various induced subgraphs of $G$
with vertex weight inherited from $\weight$.
Somewhat abusing notation, we will keep $\weight$ for the weight function
in any induced subgraph of $G$.

\subparagraph{Tree decompositions.}
Let $G$ be a graph. A \emph{tree decomposition} of $G$ is a pair $(\tree,\beta)$ where
$\tree$ is a tree and $\beta : V(\tree) \to 2^{V(G)}$ is a function satisfying
the following: (i) for every $uv \in E(G)$ there exists $t \in V(\tree)$ with $u,v \in \beta(t)$, and
(ii) for every $v \in V(G)$ the set $\{t \in V(\tree)~|~v \in \beta(t)\}$ induces a connected nonempty subtree of $\tree$. 
For every $t \in V(\tree)$ and $st \in E(\tree)$, the set $\beta(t)$ is the \emph{bag} at node $t$ and the set $\sigma(st) := \beta(s) \cap \beta(t)$ is the \emph{adhesion} at edge $st$. 
The critical property of a tree decomposition $(\tree,\beta)$ is that if $st \in E(\tree)$ and $V_s$ and $V_t$ are two connected components of $\tree-\{st\}$ that contain $s$ and $t$, respectively,
then $\sigma(st)$ separates $\bigcup_{x \in V_s} \beta(x) \setminus \sigma(st)$ from $\bigcup_{x \in V_t} \beta(x) \setminus \sigma(st)$ in $G$. 

The \emph{torso} of a bag $\beta(t)$ in a tree decomposition $(\tree,\beta)$
is a graph $H$ with $V(H) = \beta(t)$ and $uv \in E(H)$ if $uv \in E(G)$ or there exists a neighbor $s \in N_\tree(t)$ with $u,v \in \sigma(st)$. 
That is, the torso of $\beta(t)$ is created from $G[\beta(t)]$ by turning the adhesion $\sigma(st)$ into a clique for every neighbor $s$ of $t$ in $\tree$. 

\subparagraph{Extended strip decompositions.}
We follow the notation of~\cite{ICALP-qptas,arxiv-sttt-qpoly}.
For a graph $H$, by $T(H)$ we denote the set of triangles in $H$.
An \emph{extended strip decomposition} of a graph $G$ is a pair $(H, \eta)$ that consists of:
\begin{itemize}
\item a simple graph $H$,
\item a \emph{vertex set} $\eta(x) \subseteq V(G)$ for every $x \in V(H)$,
\item an \emph{edge set} $\eta(xy) \subseteq V(G)$ for every $xy \in E(H)$, and its subsets $\eta(xy,x),\eta(xy,y) \subseteq \eta(xy)$,
\item a \emph{triangle set} $\eta(xyz) \subseteq V(G)$  for every $xyz \in T(H)$,
\end{itemize}
which satisfy the following properties:
\begin{enumerate}
\item The family $\{\eta(o)~|~o \in V(H)\cup E(H) \cup T(H)\}$ is a partition of $V(G)$.
\item For every $x \in V(H)$ and every distinct $y,z \in N_H(x)$, the set $\eta(xy,x)$ is complete to $\eta(xz,x)$.
\item Every $uv \in E(G)$ is contained in one of the sets $\eta(o)$ for $o \in V(H) \cup E(H)\cup T(H)$, or is as follows:
\begin{itemize}
\item $u \in \eta(xy,x), v\in \eta(xz,x)$ for some $x \in V(H)$ and $y,z \in N_H(x)$, or
\item $u \in \eta(xy,x), v\in \eta(x)$ for some $xy \in E(H)$, or
\item $u \in \eta(xyz)$ and $v\in \eta(xy,x) \cap \eta(xy,y)$ for some $xyz \in T(H)$. 
\end{itemize}
\end{enumerate}
An extended strip decomposition $(H,\eta)$ is \emph{rigid} if for every $xy \in E(H)$, the sets $\eta(xy)$, $\eta(xy,x)$, and
$\eta(xy,y)$ are nonempty, and for every isolated $x \in V(H)$, the set $\eta(x)$ is nonempty.
Note that if $(H,\eta)$ is a rigid extended strip decomposition of $G$, then
$|V(H)|$ is bounded by $|V(G)|$.

For an extended strip decomposition $(H,\eta)$ of a graph $G$, we identify five \emph{types}
of \emph{particles}.
\begin{align*}
\textrm{vertex particle:} &\quad A_{x} := \eta(x) \text{ for each } x \in V(H),\\
\textrm{edge interior particle:} &\quad A_{xy}^{\perp} := \eta(xy) \setminus (\eta(xy,x) \cup \eta(xy,y)) \text{ for each } xy \in E(H),\\
\textrm{half-edge particle:} &\quad A_{xy}^{x} :=  \eta(x) \cup \eta(xy) \setminus \eta(xy,y) \text{ for each } xy \in E(H),\\
\textrm{full edge particle:} &\quad A_{xy}^{xy} := \eta(x) \cup \eta(y) \cup \eta(xy) \cup \bigcup_{z ~:~ xyz \in T(H)} \eta(xyz) \text{ for each } xy \in E(H),\\
\textrm{triangle particle:} &\quad A_{xyz} := \eta(xyz) \text{ for each } xyz \in T(H).
\end{align*}

As announced in the introduction, to solve \textsc{MWIS} in $G$ it suffices to know
the solution to \textsc{MWIS} in particles.
The proof of the following lemma follows closely the lines of proofs of analogous statement
of~\cite{DBLP:conf/soda/ChudnovskyPPT20} and is included for completeness in \Cref{app:matching}.

\begin{lemma}\label{lem:esd-border-mwis}
Given a \textsc{Border MWIS} instance $(G,\weight,T)$, an extended strip decomposition
$(H,\eta)$ of $G$, and a solution $f_{G[A],\weight,T \cap A}$ to the \textsc{Border MWIS} instance $(G[A],\weight,T \cap A)$
for every particle $A$ of $(H,\eta)$, 
one can in time $2^{|T|}$ times a polynomial in $|V(G)| + |V(H)|$ compute 
the solution $f_{G,\weight,T}$ to the input $(G,\weight,T)$. 
\end{lemma}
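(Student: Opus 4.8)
The plan is to reduce, for each fixed guess of the intersection of the sought independent set with the terminals, to a maximum weight matching computation in an auxiliary graph, by adapting to the border setting the matching-based reduction from extended strip decompositions to \textsc{MWIS} of~\cite{ACDR21,DBLP:conf/soda/ChudnovskyPPT20}. First I would enumerate all $I_T \subseteq T$; since there are $2^{|T|}$ of them and we will spend only polynomial time on each, this accounts for the claimed running time. For a fixed $I_T$ I would immediately set $f_{G,\weight,T}(I_T) = -\infty$ unless $I_T$ is independent in $G$. Otherwise, note that an independent set $I$ with $I \cap T = I_T$ is exactly one whose restriction $I \cap A$ to every particle $A$ of $(H,\eta)$ is an independent set of $G[A]$ with $(I \cap A) \cap (T \cap A) = I_T \cap A$; this is precisely the quantity controlled by the supplied function $f_{G[A],\weight,T\cap A}$ evaluated at $I_T \cap A$, which is finite since $I_T \cap A$ is independent.

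Next I would recall the auxiliary graph $\widehat{H}$ from the non-border reduction: it is built from $H$ by a gadget at each $x \in V(H)$ letting $x$ be ``covered'' either by its vertex particle $A_x = \eta(x)$ or by an incident edge, a gadget at each $xy \in E(H)$ encoding the four modes in which the edge is used (the interior particle $A_{xy}^{\perp}$ covering neither endpoint, the two half-edge particles $A_{xy}^{x}, A_{xy}^{y}$ covering one endpoint, and the full edge particle $A_{xy}^{xy}$ covering both and absorbing the incident triangle sets), together with the gadget handling the triangle particles $A_{xyz}$. The two modifications for the border setting are: (i) the weight of the gadget element corresponding to using a particle $A$ is $f_{G[A],\weight,T\cap A}(I_T \cap A)$ rather than the plain value of \MWIS on $G[A]$; and (ii) whenever an $\eta$-set contains a vertex of $I_T$, every mode whose particle does not contain that vertex is forbidden (deleted, or given weight $-\infty$) -- e.g.\ if $\eta(xy,x) \cap I_T \neq \emptyset$ only the half-$x$ and full modes of $xy$ survive, and if $\eta(x) \cap I_T \neq \emptyset$ then leaving $x$ uncovered is forbidden. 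I would then return, as $f_{G,\weight,T}(I_T)$, the maximum weight of a matching in this modified $\widehat{H}$ (plus whatever additive constant the chosen formulation uses for always-active particles), computed in polynomial time by a standard maximum weight matching algorithm.

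The correctness proof splits, as in the non-border case, into two inequalities. For ``$\le$'', an optimal $I^\star$ witnessing $f_{G,\weight,T}(I_T)$ induces a consistent choice of modes and hence a matching of $\widehat{H}$; the forbidden modes are automatically respected because $I_T \subseteq I^\star$ forces, around every vertex of $I_T$, a mode whose particle contains that vertex; and $\weight(I^\star \cap A) \le f_{G[A],\weight,T\cap A}(I_T \cap A)$ for every active particle $A$, so the matching weight is at least $\weight(I^\star)$. For ``$\ge$'', from a maximum matching one reads off, for each active particle $A$, an $f$-optimal independent set $J_A$ of $G[A]$ with $J_A \cap (T\cap A) = I_T \cap A$, and sets $I := \bigcup_A J_A$ over active particles. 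One then checks that $I$ is independent in $G$ and that $\weight(I)$ equals the matching weight -- exactly the content of the non-border correctness argument, which rests on the structural fact that under any consistent choice of modes the two endpoints of an edge of $G$ that are both selected lie in a common active particle -- that $I \cap T \supseteq I_T$ (each $t \in I_T$ lies in some $\eta$-set and the unique active particle covering the relevant part has $t$ in its imposed boundary by rule (ii), so $t \in J_A$), and that $I \cap T \subseteq I_T$ (each active $J_A$ satisfies $J_A \cap (T\cap A) = I_T \cap A$).

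The step I expect to be the main obstacle is the bookkeeping around exclusions that a particle cannot see: a vertex $v \in A$ adjacent to some $t \in I_T \setminus A$ must not be selected globally, yet $f_{G[A],\weight,T\cap A}$ knows nothing about $t$. The resolution is to verify, by going through the three ways an edge of $G$ can interact with $(H,\eta)$, that such a $v$ can only be selected in a mode that rule (ii) has already removed; concretely, the membership of $t$ in its $\eta$-set forces the incident edge or vertex of $H$ into a mode whose particle also contains $v$, so the within-particle boundary condition performs the exclusion for us. Making this case analysis airtight -- and, relatedly, re-checking the triangle gadget under the modified weights so that each triangle set is covered by exactly one active particle -- is where the real work lies; the remainder is a transcription of the reductions in~\cite{ACDR21,DBLP:conf/soda/ChudnovskyPPT20}.
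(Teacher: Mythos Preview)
Your proposal is correct and follows essentially the same route as the paper's proof: iterate over $I_T\subseteq T$, replace the plain particle values by $a(A):=f_{G[A],\weight,T\cap A}(I_T\cap A)$, modify the matching gadget wherever $I_T$ meets an interface, and reduce to maximum weight matching. The paper makes your rule~(ii) concrete by calling $x\in V(H)$ \emph{forced} when $I_T\cap\eta(xy,x)\neq\emptyset$ for some (necessarily unique, since the interfaces at $x$ are pairwise complete) \emph{enforcer} edge $xy$, and then rebuilding the gadget at $x$ so that only the half-$x$/full modes of $xy$ and only the interior/half-$z$ modes of every other incident edge $xz$ remain available.

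One correction to your sketch of the ``main obstacle'', since you flag it as the crux. The mechanism you name --- ``$t$ forces a mode whose particle also contains $v$, and then the within-particle boundary condition excludes $v$'' --- is not what happens in every case. For $t\in\eta(xy,x)$ and $v\in\eta(xz,x)$ with $z\neq y$, the forced particle at $xy$ does \emph{not} contain $v$; what actually happens is that forcing $x$ (together with the matching constraint that $x$ is covered at most once) removes the half-$x$ and full modes of $xz$, so $v$ lies in \emph{no} active particle and simply cannot be selected. Similarly, $t\in\eta(xyz)$ triggers no forbiddance under your rule~(ii); an adjacent $v\in\eta(xy,x)\cap\eta(xy,y)$ is excluded either because both $t$ and $v$ land in $A_{xy}^{xy}$ (when $xy$ is in full mode) or because $v$ is in no active particle (otherwise, since $v$ belongs only to $A_{xy}^{xy}$ among the edge particles). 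The accurate slogan for the case analysis is: for every cross-type edge $tv$ of $G$ with $t\in I_T$, either $t$ and $v$ share an active particle (and independence of $J_A$ plus the boundary condition $t\in J_A$ excludes $v$), or $v$ lies in no active particle at all. With this dichotomy in hand the verification is routine and matches the paper's Claims~1 and~2.
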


We need the following simple observations.
\begin{lemma}\label{lem:esd-deg}
Let $G$ be a $K_t$-free graph and let $(H,\eta)$ be a rigid extended strip decomposition of
$G$. Then the maximum degree of $H$ is at most $t-1$.
\end{lemma}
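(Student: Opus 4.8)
The goal is to bound the maximum degree of $H$ in a rigid extended strip decomposition $(H,\eta)$ of a $K_t$-free graph $G$. The plan is to exploit the completeness condition (property~2 of extended strip decompositions) together with rigidity to produce a large clique in $G$ whenever a vertex of $H$ has high degree.

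First I would fix a vertex $x \in V(H)$ and consider its neighborhood $N_H(x) = \{y_1, \dots, y_d\}$ where $d = \deg_H(x)$. For each neighbor $y_i$, rigidity guarantees that the set $\eta(xy_i, x)$ is nonempty (since $\eta(xy_i)$, $\eta(xy_i,x)$, $\eta(xy_i,y_i)$ are all nonempty for every edge), so I can pick a vertex $u_i \in \eta(xy_i, x)$. The key step is property~2: for distinct $y_i, y_j \in N_H(x)$, the set $\eta(xy_i, x)$ is complete to $\eta(xy_j, x)$, hence $u_i u_j \in E(G)$. Since property~1 says the sets $\eta(o)$ partition $V(G)$, and the $u_i$ lie in distinct edge-sets $\eta(xy_i)$, the vertices $u_1, \dots, u_d$ are pairwise distinct. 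Therefore $\{u_1, \dots, u_d\}$ induces a clique $K_d$ in $G$.

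Since $G$ is $K_t$-free, we must have $d \leq t-1$, which gives $\deg_H(x) \leq t-1$ for every $x \in V(H)$, as desired.

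I do not expect any real obstacle here; the only thing to be careful about is the distinctness of the chosen vertices $u_i$, which is immediate from the partition property since each $u_i \in \eta(xy_i) \supseteq \eta(xy_i,x)$ and the edge-sets $\eta(xy_i)$ for distinct edges $xy_i$ are disjoint by property~1. (If one wanted to avoid invoking rigidity, note that without it $\eta(xy_i,x)$ could be empty, so rigidity is genuinely used — but it is exactly the hypothesis that makes every relevant set nonempty.)
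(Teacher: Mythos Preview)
Your proof is correct and follows essentially the same approach as the paper: pick one vertex from each nonempty set $\eta(xy,x)$ for $y \in N_H(x)$ and use the pairwise completeness to exhibit a clique of size $\deg_H(x)$. Your write-up is in fact slightly more explicit than the paper's, as you spell out why the chosen vertices are distinct (via the partition property).
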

\begin{proof}
Let $x \in V(H)$. Observe that the sets
$\{\eta(xy,x)~|~y \in N_H(x)\}$ are nonempty and complete to each other in $G$.
Hence, $G$ contains a clique of size equal to the degree of $x$ in $H$. 
\end{proof}
\begin{lemma}\label{lem:esd-rep}
Let $G$ be a graph and let $(H,\eta)$ be an extended strip decomposition of
$G$ such that the maximum degree of $H$ is at most $d$.
Then, every vertex of $G$ is in at most $\max(4,2d+1)$ particles.
\end{lemma}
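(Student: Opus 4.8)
The plan is to fix a vertex $v \in V(G)$ and bound the number of particles containing $v$ by a direct case analysis on the location of $v$ in the partition from property~1 of an extended strip decomposition. Thus let $o$ be the unique element of $V(H) \cup E(H) \cup T(H)$ with $v \in \eta(o)$. The key elementary fact used throughout is that the sets $\eta(o')$, $o' \in V(H)\cup E(H)\cup T(H)$, are pairwise disjoint, so for each of the five particle types it is easy to read off from its defining formula which particles can possibly contain $v$.

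\textbf{Case $o = x \in V(H)$.} Then $v$ belongs to the vertex particle $A_x$; it belongs to $A_{xy}^{x}$ for each $y \in N_H(x)$ (the half-edge particles pointing towards $x$), and to the full edge particle $A_{xy}^{xy}$ for each $y \in N_H(x)$; and, by disjointness of the $\eta$-sets, to no edge interior particle, no triangle particle, and no particle of the remaining shapes. Hence $v$ lies in $1 + 2\deg_H(x) \le 2d+1$ particles.

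\textbf{Case $o = xy \in E(H)$.} Here the only particles that can contain $v$ are $A_{xy}^{\perp}$, $A_{xy}^{x}$, $A_{xy}^{y}$, and $A_{xy}^{xy}$ (everything else has an $\eta$-set disjoint from $\eta(xy)$, except that $A_{xy}^{xy}$ always contains $\eta(xy)$ and hence $v$). A short sub-case analysis according to whether $v$ lies in $\eta(xy,x)$ and/or $\eta(xy,y)$ shows: if $v$ is in both, only $A_{xy}^{xy}$ contains $v$; if $v$ is in exactly one, two of the four particles contain $v$; and if $v$ is in neither, all four do. So $v$ lies in at most $4$ particles. The case $o = xyz \in T(H)$ is analogous: disjointness of $\eta$-sets leaves only the triangle particle $A_{xyz}$ and the three full edge particles $A_{xy}^{xy}, A_{yz}^{yz}, A_{xz}^{xz}$ of the triangle's edges, i.e.\ exactly $4$ particles.

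Taking the maximum over the three cases gives the bound $\max(4, 2d+1)$. The only mildly delicate point — and the one I would write out with care — is the edge sub-case bookkeeping around $\eta(xy,x)$ and $\eta(xy,y)$; the rest is a mechanical check against the five particle definitions and the disjointness of the partition.
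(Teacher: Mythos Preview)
Your proof is correct and follows essentially the same approach as the paper's: a case analysis on whether $v$ lies in a vertex set, an edge set, or a triangle set of the decomposition, yielding the counts $2d+1$, $4$, and $4$ respectively. Your edge sub-case bookkeeping with respect to $\eta(xy,x)$ and $\eta(xy,y)$ is a bit more detailed than the paper's (which simply states ``at most four particles for the edge $xy$''), but the argument is otherwise identical.
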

\begin{proof}
Pick $v \in V(G)$ and observe that:
\begin{itemize}
\item If $v \in \eta(x)$ for some $x \in V(H)$, then $v$ is in the vertex particle of $x$
and in one half-edge and one full-edge particle for every edge of $H$ incident with $x$. 
Since there are at most $d$ such edges, $v$ is in at most $2d+1$ particles.
\item If $v \in \eta(xy)$ for some $xy \in E(H)$, then $v$ is in at most four particles
for the edge $xy$. 
\item If $v \in \eta(xyz)$ for some $xyz \in T(H)$, then $v$ is in the triangle particle
for $xyz$ and in three full edge particles, for the three sides of the triangle $xyz$.
\end{itemize}
\end{proof}

%\section{General problem}\label{sec:problem}
%\input{problem}

\section{Bounded-degree graphs: Proof of Theorem~\ref{thm:main-maxdeg}}\label{sec:degree}
This section is devoted to the proof of~\cref{thm:main-maxdeg}.

Let $d,t$ be positive integers and let $(G,\weight)$ be the input vertex-weighted graph.
We denote $n := |V(G)|$ and $\maxdeg$ to be the maximum degree of $G$.
Let 
\[ \ell := (d-1)(3t+1) + \lceil 11 \log n + 6 \rceil (t + 2) = \Oh(dt + t \log n) \]
be an upper bound on
the size of $X$ for any application of~\cref{thm:ICALP:weight}
for any induced subgraph of $G$.

We describe a recursive algorithm that takes as input
an induced subgraph $G'$ of $G$ with weights $\weight$ and a set of terminals
$T \subseteq V(G')$ of size at most $4\ell \maxdeg^2$ and solves \textsc{Border MWIS}
on $(G',\weight,T)$. 
The root call is for $G' := G$ and $T := \emptyset$; indeed, note that
$f_{G,\weight,\emptyset}(\emptyset)$ is the maximum possible weight of an independent
set in $G$.

Let $(G',\weight,T)$ be an input to a recursive call. First, the algorithm initializes $f_{G',\weight,T}(I_T) := -\infty$ for every $I_T \subseteq T$. 

If $|V(G')| \leq 4 \maxdeg^2 \ell$, the algorithm proceeds by brute-force:
it enumerates independent sets $I \subseteq V(G')$ and updates
$f_{G',\weight,T}(I \cap T)$ with $\weight(I)$ whenever the previous value 
of that cell was smaller. 
As $\ell = \Oh(dt+ t \log n)$, this step takes $2^{\Oh(dt \maxdeg^2)}n^{\Oh(t \maxdeg^2)}$ time.
This completes the description of the leaf step of the recursion.

If $|V(G')| > 4 \maxdeg^2 \ell$, the algorithm proceeds as follows.
If $|T| \leq 3\maxdeg^2 \ell$, let $U := V(G')$, and otherwise, let $U := T$. 
The algorithm invokes~\cref{thm:ICALP:weight} on $G'$ and $U$.
If an induced $dS_{t,t,t}$ is returned, then it can be returned by the main algorithm as it is in particular an induced
subgraph of $G$. Hence, we can assume that we obtain a set $X \subseteq V(G)$ of size at most $\ell$
and an extended strip decomposition $(H,\eta)$ of $G^\ast := G'-N_{G'}[X]$ whose every particle contains
at most $|U|/2$ vertices of $U$.

Observe that as $|X| \leq \ell$ and the maximum degree of $G$ is $\maxdeg$, we have
$ |N_{G'}(N_{G'}[X])| \leq \maxdeg^2 \ell$.
Let $T^\ast := (T \cap V(G^\ast)) \cup N_{G'}(N_{G'}[X])$.
Note that we have 
$ T^\ast \subseteq V(G^\ast)$ and $|T^\ast| \leq 5\maxdeg^2 \ell$.
For every particle $A$ of $(H,\eta)$, invoke a recursive call 
on $(G^\ast_A := G^\ast[A],\weight,T^\ast_A := T^\ast \cap A)$, obtaining
$f_{G^\ast_A,\weight,T^\ast_A}$ (or an induced $S_{t,t,t}$, which can be directly returned). 
Use~\cref{lem:esd-border-mwis} to obtain a solution $f_{G^\ast,\weight,T^\ast}$
to \textsc{Border MWIS} instance $(G^\ast,\weight,T^\ast)$. 

Finally, iterate over every $I_T \subseteq T^\ast \cup N_{G'}[X]$ (note that
$T \subseteq T^\ast \cup N_{G'}[X]$) and, if $I_T$ is independent,
update the cell $f_{G',\weight,T}(I_T \cap T)$
with the value $\weight(I_T \setminus T^\ast) + f_{G^\ast,\weight,T^\ast}(I_T \cap T^\ast)$,
if this value is larger than the previous value of this cell.
This completes the description of the algorithm.

The correctness of the algorithm is immediate thanks to~\cref{lem:esd-border-mwis}
and the fact that $N_{G'}[X]$ is adjacent in $G'$ only to $N_{G'}(N_{G'}[X])$ which
is a subset of $T^\ast$. 

For the complexity analysis,
consider a recursive call to $(G^\ast_A,\weight,T^\ast_A)$ for a particle $A$. 
If $|T| \leq 3 \maxdeg^2 \ell$, then $|T^\ast_A| \leq |T^\ast| \leq 4 \maxdeg^2 \ell$.
Otherwise, $U = T$ and $|T \cap A| \leq |T|/2 \leq 2 \maxdeg^2 \ell$. 
As $|N_{G'}(N_{G'}[X])| \leq \maxdeg^2 \ell$, we have $|T^\ast_A| \leq 3 \maxdeg^2 \ell$.
Hence, in the recursive call the invariant of at most $4 \maxdeg^2 \ell$ terminals is maintained
and, moreover:
\begin{itemize}
\item if $|T| \leq 3 \maxdeg^2 \ell$, then $U = |V(G')|$ and $|V(G^\ast_A)| = |A| \leq |V(G')|/2$;
\item otherwise, $V(G^\ast_A) \subseteq V(G')$ and $|T^\ast_A| \leq 3 \maxdeg^2 \ell$, 
  hence the recursive call will fall under the first bullet.
\end{itemize}
We infer that the depth of the recursion is at most $2\lceil \log n \rceil$.

At every non-leaf recursive call, we spend $n^{\Oh(1)}$ time on invoking the algorithm from \cref{thm:ICALP:weight},
$2^{\Oh(dt \maxdeg^2)}n^{\Oh(t \maxdeg^2)}$ time
to compute $f_{G^\ast,\weight,T^\ast}$ using~\cref{lem:esd-border-mwis},
and $2^{\Oh(dt \maxdeg^2)}n^{\Oh(t \maxdeg^2)}$ time for the final iteration over all subsets $I_T \subseteq T^\ast \cup N_{G'}[X]$. 
Hence, the time spent at every recursive call is bounded by $2^{\Oh(dt \maxdeg^2)}n^{\Oh(t \maxdeg^2)}$.

At every non-leaf recursive call, we make subcalls to $(G^\ast_A,\weight,T_A^\ast)$ for
every particle $A$ of $(H,\eta)$. \Cref{lem:esd-deg,lem:esd-rep} ensure
that the sum of $|V(G^\ast_A)|$
over all particles $A$ is bounded by $(2\maxdeg+3) |V(G')|$. 
Hence, the total size of all graphs in the $i$-th level of the recursion is bounded
by $n \cdot (2\maxdeg+3)^i$. Since the depth of the recursion is bounded
by $2\lceil \log n \rceil$, the total size of all graphs in the recursion tree is bounded
by $n^{\Oh(\log \maxdeg)}$. Since this also bounds the size of the recursion tree,
we infer that the whole algorithm runs in time $2^{\Oh(dt \maxdeg^2)}n^{\Oh(t \maxdeg^2)}$. 

This completes the proof of~\cref{thm:main-maxdeg}.

\section{Graphs with no large bicliques: Proof of Theorem~\ref{thm:main-biclique}}\label{sec:degeneracy}
This section is devoted to the proof of \cref{thm:main-biclique}.

Let $t$ be a positive integer and let $k$ be the constant depending
on $t$ from~\cref{cor:tree-decomp}.
Again, let $(G,\weight)$ be the input vertex-weighted graph, let $n := |V(G)|$, and let 
\[ \ell := (d-1)(3t+1) + \lceil 11 \log n + 6 \rceil (t + 2) = \Oh(dt \log n)\]
be an upper bound on
the size of $X$ for any application of \cref{thm:ICALP:weight}
for any induced subgraph of $G$\footnote{As the dependence of $k$ on $d,t,s$ is superpolynomial, for the sake of simplicity, we do not try to optimize the dependence of the complexity bound on $d,t,s$.}.

The general framework and the leaves of the recursion are almost exactly the same
as in the previous section, but with different thresholds.
That is, 
we describe a recursive algorithm that takes as input
an induced subgraph $G'$ of $G$ with weights $\weight$ and a set of terminals
$T \subseteq V(G')$ of size at most $32k^5\ell$ and solves \textsc{Border MWIS}
on $(G',\weight,T)$. The root call is for $G' := G$ and $T := \emptyset$
and the algorithm returns $f_{G,\weight,\emptyset}(\emptyset)$ as the final answer.

Let $(G',\weight,T)$ be an input to a recursive call. The algorithm
initiates first $f_{G',\weight,T}(I_T) = -\infty$ for every $I_T \subseteq T$. 

If $|V(G')| \leq 32k^5\ell$, the algorithm proceeds by brute-force:
it enumerates independent sets $I \subseteq V(G')$ and updates
$f_{G',\weight,T}(I \cap T)$ with $\weight(I)$ whenever the previous value 
of that cell was smaller. 
As $\ell = \Oh(dt \log n)$ and $k$ is a constant depending on $d,t$, and $s$, this step takes polynomial time.
This completes the description of the leaf step of the recursion.

Otherwise, if $|V(G')| > 32k^5\ell$, we invoke~\cref{cor:tree-decomp}
on $G'$, obtaining a tree decomposition $(\tree,\beta)$ of $G'$. 
If $|T| \leq 24k^5 \ell$, let $U := V(G') \setminus T$, and otherwise, let $U := T$.

For every $t_1t_2 \in E(\tree)$, proceed as follows. For $i=1,2$, let $\tree_i$ be the connected
component of $\tree-\{t_1t_2\}$ that contains $t_i$ and let $V_i = \bigcup_{x \in \tree_i} \beta(x) \setminus \sigma(t_1t_2)$.
Clearly, $\sigma(t_1t_2)$ separates $V_1$ from $V_2$. Orient the edge $t_1t_2$ towards $t_i$ with larger $|U \cap V_i|$, breaking ties arbitrarily.

There exists $t \in V(\tree)$ of outdegree $0$. Then, 
for every connected component $C$ of $G'-\beta(t)$ we have $|C \cap U| \leq |U|/2$. 
%
%We check if there exists an edge $st \in E(\tree)$ such that the adhesion $\sigma(st)$
%is a balanced separator in the following sense: every connected component of $G'-\sigma(st)$
%contains at most $|U|/2$ vertices of $U$. 
%
%If this is the case, we perform a standard branch on $\sigma(st)$.
%That is, we iterate over all independent sets $J \subseteq \sigma(st)$
%and, if $\mathcal{C}^J$ is the set of connected components of $G'-\sigma(st)-N_{G'}(J)$
%for every $C \in \mathcal{C}^J$ we 
%recurse on $(G_C^J := G'[C],\weight,T_C^J := T \cap C)$,
%obtaining a function $f_{G_C^J,\weight,T_C^J}$. 
%Then, we iterate over all subsets $I_T$ of $\sigma(st) \cup T$,
%  set $J := I_T \cap \sigma(st)$,
%and update the cell $f_{G',\weight,T}(I_T \cap T)$ with 
%value $\weight(J) + \sum_{C \in \mathcal{C}^J} f_{G_C^J,\weight,T_C^J}(I_T \cap C)$
%if it is larger than the previous value of this cell. 
%
%If no such edge $st$ exists, there exists $t \in V(\tree)$ such that
%for every connected component $C$ of $G-\beta(t)$ we have $|C \cap U| < |U|/2$. 
Fix one such node $t$ and let $B := \beta(t)$ and let $\mathcal{C}$ be the set of
connected components of $G'-B$. 
Let $G^B$ be a supergraph of $G'[B]$ obtained from $G'[B]$ by turning, for every $C \in \mathcal{C}$,
the neighborhood $N_{G'}(C)$ into a clique. Note that $G^B$ is a subgraph
of the torso of $\beta(t)$. 
Hence, by the properties promised by~\cref{cor:tree-decomp},
  for every $C \in \mathcal{C}$ we have $|N_{G'}(C)| < k$ (as this set is contained
  in a single adhesion of an edge incident with $t$ in $\tree$)
  and
$G^B$ contains at most $k$ vertices of degree larger than $2k(k-1)$.
Let $Q$ be the set of vertices of $G^B$ of degree larger than $2k(k-1)$. 

We perform exhaustive branching on $Q$. That is, we iterate over all independent
sets $J \subseteq Q$ and denote $G^J := G'-Q-N_{G'}(J)$, $T^J := T \cap V(G^J)$,
     $U^J := U \cap V(G^J)$.
For one $J$, we proceed as follows.

We invoke~\cref{thm:ICALP:weight} to $G^J$ with set $U^J$,
obtaining a set $X^J$ of size at most $\ell$ 
and an extended strip decomposition $(H^J,\eta^J)$ of $G^J-N_{G^J}[X^J]$
whose every particle has at most $|U^J|/2 \leq |U|/2$ vertices of $U$.
Note that $G^J$ is an induced subgraph of $G'$, which is an induced subgraph of $G$,
     so there is no induced $dS_{t,t,t}$ in $G^J$.

A component $C \in \mathcal{C}$ is \emph{dirty} if $N_{G^J}[X^J] \cap N_{G'}[C] \neq \emptyset$ and \emph{clean} otherwise.
Let
\[ Y^J := (N_{G^J}[X^J] \cap B) \cup \bigcup_{C \in \mathcal{C} : C\mathrm{\ is\ dirty}} (N_{G'}(C) \cap V(G^J)). \]
The following bounds will be important for further steps.
\begin{equation}\label{eq:NXJ}
|N_{G^J}[X^J] \cap B| \leq (2k(k-1)+1)|X^J|.
\end{equation}
To see~\eqref{eq:NXJ} observe that a vertex $v \in X^J \cap B$ has at most $2k(k-1)$ neighbors in $B$ (as every vertex of $B \setminus Q$
    has degree at most $2k(k-1)$ in $G_B$) while every vertex $v \in X^J \setminus B$ has at most $k$ neighbors in $B$, as every component
of $G'-B$ has at most $k$ neighbors in $B$. This proves~\eqref{eq:NXJ}.
\begin{equation}\label{eq:YJ}
|Y^J| \leq (k + (2k(k-1)+1)^2) |X^J| \leq 4k^4 |X^J| \leq 4k^4 \ell = \Oh(k^4 dt \log n).
\end{equation}
To see~\eqref{eq:YJ}, consider a dirty component $C \in \mathcal{C}$. Observe that either $C$ contains a vertex of $X^J$ or 
$N_{G'}(C) \cap V(G^J)$ contains a vertex of $N_{G^J}[X^J] \cap B$. 
There are at most $|X^J|$ dirty components of the first type, contributing in total at most $k|X^J|$ vertices to $Y^J$. 
For the dirty components of the second type, although there can be many of them,
we observe that if $v \in N_{G'}(C) \cap N_{G^J}[X^J] \cap B$, then $N_{G'}(C) \cap V(G^J) \subseteq N_{G_B}[v]$.
Hence, for every dirty component of the second type, it holds
that $N_{G'}(C) \cap V(G^J) \subseteq N_{G_B}[N_{G^J}[X^J] \cap B]$. 
Since the degree of each vertex of $G_B$ is at most $2k(k-1)$, 
by~\eqref{eq:NXJ} we have 
\[ \left| N_{G_B}\left[N_{G^J}[X^J] \cap B \right] \right| \leq (2k(k-1)+1)^2 |X^J|. \]
The bound~\eqref{eq:YJ} follows.

A component $C \in \mathcal{C}$ is \emph{touched} if it is dirty or $N_{G'}(C)$ contains a vertex of $Y^J$. 
Let 
\[ Z^J := (N_{G^J}[Y^J] \cap B) \cup \bigcup_{C \in \mathcal{C} : C\mathrm{\ is\ touched}} N_{G'}(C) \cap V(G^J). \]
Using similar arguments as before, we can prove
\begin{equation}\label{eq:ZJ}
|Z^J| \leq (2k(k-1)+1)|Y^J| \leq 8k^5 |X^J| \leq 8k^5 \ell = \Oh(k^5 dt \log n).
\end{equation}
Indeed, if $C$ is touched, then $N_{G'}(C)$ contains a vertex $v \in Y^J$ (if $C$ is dirty, $N_{G'}(C) \cap V(G^J)$ is contained in $Y^J$), and then
$N_{G'}(C)$ is contained in $N_{G_B}[v]$. Also, for $v \in Y^J$ we have $N_{G^J}[v] \cap B \subseteq N_{G_B}[v]$.
Hence, $Z^J \subseteq N_{G_B}[Y^J]$. Since the maximum degree of a vertex of $B \setminus Q$ is $2k(k-1)$, this proves~\eqref{eq:ZJ}. 

For every touched $C \in \mathcal{C}$, denote $G_C := G^J[N_{G'}[C] \cap V(G^J)]$ and $T_C := ((T \cap C) \cup N_{G'}(C)) \cap V(G^J)$. 
Recurse on $(G_C,\weight,T_C)$, obtaining $f_{G_C,\weight,T_C}$.

Let 
\[ G^Y := G^J - Y^J - \bigcup_{C \in \mathcal{C} : C\mathrm{\ is\ touched}} C. \]
Note that, by the definition of dirty and touched, $G^Y$ is an induced subgraph of $G^J-N_{G^J}[X^J]$. 
Hence, $(H^J,\eta^J)$ can be restricted to a (not necessarily rigid) extended strip decomposition
$(H^J,\eta^{J,Y})$ of $G^Y$. 

Let $T^Y := (T \cup Z^J) \cap V(G^Y)$. 
For every particle $A$ of $(H^J,\eta^{J,Y})$,
recurse on $(G^Y[A],\weight,T^Y \cap A)$, obtaining $f_{G^Y[A],\weight,T^Y \cap A}$. 
Then, use these values with~\cref{lem:esd-border-mwis} to solve a \textsc{Border MWIS} instance $(G^Y,\weight,T^Y)$, obtaining $f_{G^Y,\weight,T^Y}$.

Iterate over every independent set $I_T \subseteq (T \cap V(G^J)) \cup T^Y \cup Y^J$.
Observe that $G'$ admits an independent set $I$ with $I \cap (Q \cup T \cup T^Y \cup Y^J) = J \cup I_T$ and weight:
\[ \weight(J) + \weight(I_T \setminus T^Y) + f_{G^Y,\weight,T^Y}(I_T \cap T^Y) + \sum_{C \in \mathcal{C} : C\mathrm{\ is\ touched}} \left(f_{G_C,\weight,T_C}(N_{G'}[C] \cap I_T) - \weight(I_T \cap N_{G'}(C))\right). \]
Update the cell $f_{G',\weight,T}((I_T \cup J) \cap T)$ with this value if it is larger than the previous value of this cell.
This finishes the description of the algorithm.

For correctness, it suffices to note that for every touched component $C$, the whole $N_{G'}(C) \cap V(G^J)$ is in the terminal set 
for the recursive call $(G_C,\weight,T_C)$ and the whole $N_{G'}(C) \cap V(G^Y)$ is in $Z^J$ and thus in the terminal set for the \textsc{Border MWIS} instance $(G^Y,\weight,T^Y)$. 

For the sake of analysis, consider a recursive call on $(G_C,\weight,T_C)$ for a touched component $C$.
If $|T| \leq 24k^5\ell$ and $U = V(G') \setminus T$, then $|T_C| \leq |T|+k \leq 32k^5\ell$ and $|V(G_C) \setminus T_C| \leq |C \setminus T| \leq |V(G') \setminus T|/2$. 
Otherwise, if $|T| > 24k^5\ell$ and $U=T$, then $|T_C| \leq |T|/2+k \leq 16k^5\ell + k \leq 24k^5\ell$. 
Thus, the recursive call on $(G_C,\weight,T_C)$ will fall under the first case of at most $24k^5\ell$ terminals.

Analogously, consider a recursive call on $(G^Y[A],\weight,T^Y \cap A)$ for a particle $A$ of $(H^J,\eta^{J,Y})$. 
If $|T| \leq 24k^5\ell$ and $U = V(G') \setminus T$, then $|T^Y \cap A| \leq |T^Y| \leq |T|+|Z^J| \leq 32k^5\ell$ 
due to~\eqref{eq:ZJ}. Furthermore, $|V(G^Y[A]) \setminus T^Y| \leq |V(G') \setminus T|/2$. 
Otherwise, if $|T| > 24k^5\ell$ and $U=T$, then $|T^Y \cap A| \leq |T|/2+|Z^J| \leq 16k^5\ell + 8k^5\ell \leq 24k^5\ell$ again due to~\eqref{eq:ZJ}.
Thus, the recursive call on $(G^Y[A],\weight,T^Y \cap A)$ will fall under the first case of at most $24k^5\ell$ terminals.

Finally, note that a recursive call $(G',\weight,T)$ without nonterminal vertices (i.e., with $T = V(G')$) is a leaf call.

We infer that all recursive calls satisfy the invariant of at most $32k^5\ell$ terminals and the depth of the recursion tree is bounded by $2 \lceil \log n \rceil$ (as every second
level the number of nonterminal vertices halves).

At each recursive call, we iterate over at most $2^k$ subsets $J \subseteq Q$.
\cref{lem:esd-deg} ensures that the maximum degree of $H^J$ is at most $2t-1$, while~\cref{lem:esd-rep} ensures that every vertex of $G^Y$ is used in at most
$4t$ particles of $(H^J,\eta^{J,Y})$. 
In a subcall $(G_C,\weight,T_C)$ for a touched component $C$, vertices of $C$ are not used in any other call for the current choice of $J$, while all vertices of $V(G_C) \setminus C$ are terminals.
Consequently, every nonterminal vertex $v$ of $G'$ is passed as a nonterminal vertex to a recursive subcall at most $2^k \cdot 4t$ number of times
(and a terminal is always passed to a subcall as a terminal). 
Furthermore, a recursive call without nonterminal vertices is a leaf call. 
As the depth of the recursion is $\Oh(\log n)$, we infer that, summing over all recursive calls in the entire algorithm, the number of nonterminal vertices
is bounded by $n^{\Oh(\log t + k)}$ and the total size of the recursion tree is $n^{\Oh(\log t + k)}$. 

At each recursive call, we iterate over all $2^k$ subsets $J \subseteq Q$ and then we invoke~\cref{thm:ICALP:weight}
and iterate over all independent sets $I_T$ in $(T \cap V(G^J)) \cup T^Y \cup Y^J$.
Thanks to the invariant $|T| \leq 32k^5\ell$ and bounds~\eqref{eq:YJ}, and~\eqref{eq:ZJ}, this set is of size $\Oh(k^5 \ell)$. 
Hence, every recursive call runs in time $n^{\Oh(k^5 t) + k c_{d,t}}$, where $c_{d,t}$ is a constant depending on $d$ and $t$. 
As $k$ is a constant depending on $d,t,s$, the final running time bound is polynomial.

This completes the proof of~\cref{thm:main-biclique}.

\section{Conclusion}\label{sec:conclusion}
%Let us remark that the authors of~\cite{ICALP-qptas} conjecture that the size of the set $X$ in~\cref{thm:ICALP:weight} can be actually made constant. The main combinatorial insight of~\citep{ACDR21} is that such a statement holds for bounded-degree graphs.

%Let us remark that it is very easy to lift~\cref{thm:main-maxdeg} to the class of bounded-degree $H$-free graphs, where every component of $H$ is a subdivided claw, see~\cite{ACDR21,ICALP-qptas}. However, it is far from clear  how to do it in the setting of \cref{thm:main-biclique}.

While it is generally believed that \MWIS is polynomial-time-solvable in $S_{t,t,t}$-free (and even $dS_{t,t,t}$-free) graphs (with no further assumptions), such a result seems currently out of reach. Thus it is interesting to investigate how further can we relax the assumptions on instances, as we did when going from \cref{thm:main-maxdeg} to \cref{thm:main-biclique}.
In particular, we used the assumption of $K_r$-freeness twice: once in \cref{lem:lozinrazgon} and then to argue that $H$ (the pattern of an extended strip decomposition we obtain) is of bounded degree. On the other hand, the assumption of $K_{r,r}$-freeness was used just once: in \cref{lem:lozinrazgon}.
Thus it seems natural to try to prove the following conjecture.

\begin{conjecture}
For every integers $t,r$ there exists a polynomial-time algorithm that, given 
an $S_{t,t,t}$-free and $K_r$-free vertex-weighted graph $(G,\weight)$
computes the maximum possible weight of an independent set in $(G,\weight)$.
\end{conjecture}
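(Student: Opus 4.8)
A natural line of attack on the conjecture is to retain the recursive skeleton of the proofs of \cref{thm:main-maxdeg} and \cref{thm:main-biclique} --- repeatedly apply \cref{thm:ICALP:weight}, report the induced $S_{t,t,t}$ if one is found, and otherwise recurse on the particles of the returned rigid extended strip decomposition of $G-N[\bigcup\mathcal{P}]$ while carrying the interface with $N[\bigcup\mathcal{P}]$ through the \textsc{Border MWIS} formulation --- and to weaken the two appeals to the clique/biclique hypotheses. One of them costs nothing: \cref{lem:esd-deg} needs only $K_t$-freeness, so the pattern graph $H$ of every rigid extended strip decomposition encountered still has maximum degree at most $t-1$, and \cref{lem:esd-rep} then keeps the particle overlaps, and hence the recursion-tree size estimates, under control exactly as before. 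The real difficulty is the other appeal, namely \cref{cor:tree-decomp}, which went through \cref{lem:lozinrazgon} and therefore crucially through $K_{t,t}$-freeness: once that hypothesis is dropped the input may be dense --- the complete bipartite graph $K_{m,m}$ is $K_3$-free and $S_{2,2,2}$-free for every $m$ --- so $N[\bigcup\mathcal{P}]$ may contain a constant fraction of $V(G)$, and then one can neither afford to branch over it (this was polynomial only because $|N[\bigcup\mathcal{P}]|$ was $\Oh_t(\maxdeg\log n)$, resp.\ $\Oh_t(k\log n)$) nor promote its neighbourhood to the terminal set of the recursive call.

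The new ingredient I would introduce is an induction on the clique number. Fixing $t$, one would prove the (formally stronger) statement that \MWIS, and even \textsc{Border MWIS} with $\Oh_{s,t}(\log n)$ terminals, is solvable in polynomial time in $K_s$-free $S_{t,t,t}$-free graphs, by induction on $s$. The base case $s=2$ is trivial. For the inductive step one exploits that, for every vertex $v$, the graph $G[N(v)]$ is $K_{s-1}$-free (a $K_{s-1}$ in it together with $v$ would be a $K_s$) and remains $S_{t,t,t}$-free, so by the induction hypothesis \textsc{Border MWIS} with few terminals is tractable on every induced subgraph of every such $G[N(v)]$. The goal is to reorganise the recursion so that the only sets that are ever ``large'' --- those that in the bounded-degree and $K_{t,t}$-free proofs were kept small by a degree bound --- are always confined to the neighbourhood of a single vertex, where the drop in the clique number lets the induction hypothesis take over, while everything outside such neighbourhoods stays controlled, as before, by the bounded-degree pattern $H$ of an extended strip decomposition together with \cref{lem:esd-border-mwis}.

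To make this precise one needs a replacement of \cref{thm:weissauer}/\cref{cor:tree-decomp} tailored to $K_s$-free $S_{t,t,t}$-free graphs: a polynomial-time decomposition producing a bounded-size ``core'' bag $B$ such that every component of $G-B$ carries at most half of the relevant vertices and has only $\Oh_{s,t}(1)$ neighbours in $B$, and such that $B$ has only $\Oh_{s,t}(1)$ vertices $Q$ of degree above a fixed polynomial threshold, with the extra promise that the dense structure hanging below $Q$ is essentially confined to the neighbourhoods of the vertices of $Q$ and can therefore be resolved through the induction hypothesis. Given such a decomposition, the inductive step would proceed as in \cref{sec:degeneracy}: branch on the independent subsets of the low-degree part of $B$, handle $Q$ and the neighbourhoods of its vertices by the induction hypothesis, apply \cref{thm:ICALP:weight} to the remainder, and finish with \cref{lem:esd-border-mwis}, with all the $k^{\Oh(1)}$ factors there replaced by constants depending only on $s$ and $t$.

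I expect the decomposition step to be the main obstacle, that is, understanding the structure of highly connected pieces --- large blocks --- in $K_s$-free $S_{t,t,t}$-free graphs. Such graphs genuinely do contain arbitrarily large blocks (the graphs $K_{m,m}$ are again examples), so there is no hope of an analogue of \cref{lem:lozinrazgon} that would simply forbid them; what one needs instead is a dichotomy asserting that every sufficiently large block in a $K_s$-free $S_{t,t,t}$-free graph is ``simple'' in a way that makes \MWIS directly solvable on it --- for instance that it admits a rigid extended strip decomposition with small particles, or is a blow-up of a bounded graph, or otherwise reduces to the $K_{s-1}$-free case. Proving such a structural statement, plausibly by running the three-in-a-tree / extended-strip-decomposition machinery behind \cref{thm:ICALP:weight} inside a single block rather than inside the whole graph, is in my view the crux of the conjecture.
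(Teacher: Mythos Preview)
The statement you are addressing is a \emph{conjecture}; it appears in the paper's concluding section as an open problem and the paper offers no proof. There is therefore no ``paper's own proof'' to compare against --- the paper only observes, as you do, that the $K_t$-freeness hypothesis is used twice (in \cref{lem:lozinrazgon} and in \cref{lem:esd-deg}) while $K_{t,t}$-freeness is used only once (in \cref{lem:lozinrazgon}), and leaves it at that.

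Your write-up is, accordingly, not a proof but a research outline, and you are candid about this: you flag the decomposition step as ``the main obstacle'' and close by naming the missing structural statement as ``the crux of the conjecture.'' Everything before that point is reasonable --- the observation that \cref{lem:esd-deg} needs only $K_t$-freeness matches the paper's own remark, the $K_{m,m}$ examples are correct witnesses that arbitrarily large blocks do occur once $K_{t,t}$-freeness is dropped, and the induction-on-clique-number framework (passing to $G[N(v)]$, which is $K_{s-1}$-free) is a natural angle. But the proposal stops exactly where the work begins. You need a structural dichotomy for large blocks in $K_s$-free $S_{t,t,t}$-free graphs, and you have no candidate statement, only a wish-list of possible shapes (small-particle extended strip decompositions, blow-ups of bounded graphs, reductions to the $K_{s-1}$-free case). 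Nor is it clear that the induction on $s$ buys as much as you suggest: even if \textsc{Border MWIS} is tractable on each $G[N(v)]$, the neighbourhoods $N(v)$ for $v\in Q$ can overlap arbitrarily and cover most of $V(G)$, so ``handle $Q$ and the neighbourhoods of its vertices by the induction hypothesis'' is not an operation one can afford without further structure. As it stands, the outline is a plausible research direction rather than a proof sketch; the genuine gap is precisely the one you identify yourself.
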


\paragraph{Acknowledgements} We acknowledge the welcoming and productive atmosphere at Dagstuhl Seminar 22481 “Vertex Partitioning in Graphs: From Structure to Algorithms,” where some part of the work leading to the results in this paper was done.

\bibliography{main}

\appendix

\section{Appendix: Proof of Lemma~\ref{lem:lozinrazgon}}\label{app:lozinrazgon}
For positive integers $a,b$, the Ramsey number of $a$ and $b$, denoted by $\mathsf{Ram}(a,b)$, is the smallest integer $r$ such that every graph on $r$ vertices contains either an independent set of size $a$ or a clique of size $b$.
It is well-known that $\mathsf{Ram}(a,b) \leq \binom{a+b-2}{a-1}$.

For a graph $H$, a \emph{subdivision of $H$} is any graph obtained from $H$ by subdividing each edge arbitrarly many times (possibly 0).
By a \emph{$t$-subdivision} (resp., \emph{$(\leq t)$-subdivision}) we mean a subdivision where each edge was subdivided exactly (resp., at most $t$) times.
A \emph{proper subdivision} is one where each edge was subdivided at least once.
By $S(p,t)$ we denote the $t$-subdivision of the $p$-leaf star. In particular, $S(3,t)$ is exactly $S_{t,t,t}$.

We will need the following two technical results shown by Lozin and Razgon~\cite{DBLP:journals/ejc/LozinR22}.

\begin{lemma}[{{\cite[Lemma 2]{DBLP:journals/ejc/LozinR22}}}]\label{lem:lozinrazgon1}
There is a function $c : \N \times \N \to \N$ with the following property.
If a graph $G$ contains a collection of $c(a,s)$ pairwise disjoint subsets of vertices, each of size at most $a$ and with at least one edge between any two of them, then $G$ contains $K_{s,s}$ as a subgraph.
\end{lemma} 

\begin{lemma}[{{\cite[Theorem 3]{DBLP:journals/ejc/LozinR22}}}]\label{lem:lozinrazgon2}
There is a function $m : \N \times \N \to \N$ with the following property.
Every graph $G$ containing a $(\leq t)$-subdivision of $K_{m(h,t)}$ as a subgraph contains either $K_{s,s}$ as a subgraph or a proper $(\leq t)$-subdivision of $K_{h,h}$ as an induced subgraph.
\end{lemma}

The following lemma is the main technical ingredient of the proof. It can be seen as a strengthening of Claim 2 in~\cite{DBLP:journals/ejc/LozinR22}.

\begin{lemma}\label{lem:lozinrazgonMain}
For all positive integers $d,t,r,s$ there exists $k=k(d,t,r,s)$ such that the following holds.
If $G$ contains a $k$-block, then it contains:
\begin{enumerate}
\item a $(\leq t)$-subdivision of $K_r$ as a subgraph or
\item $K_{s,s,}$ as a subgraph, or
\item  $dS_{t,t,t}$ as an induced subgraph.
\end{enumerate}
\end{lemma}
\begin{proof}
Let $c$ be the functions given by \Cref{lem:lozinrazgon1}.
Define the following constants
\begin{align*}
 q = &~ 2 \; \mathsf{Ram}(d,c(3t+1,s)), \\
 p = &~ (q/2-1)(3t+1)+3,\\
 \ell = &~ \mathsf{Ram}(p,c(t,s)), \\
 k' = &~ \max \left (r+q, t\binom{r}{2} + \ell \right),\\
 k = &~  2k'.
\end{align*}

Suppose that $G$ has a $k$-block $B$ but no subgraph isomorphic to $K_{s,s}$ nor induced subgraph isomorphic to $dS_{t,t,t}$.
We aim to show that $G$ has a $(\leq t)$-subdivision of $K_r$ as a subgraph.

Consider a pair $\{x,y\}$ of distinct vertices from $B$.
Since $x$ and $y$ cannot be separated by deleting fewer than $k$ vertices,
by Menger's theorem there is a set $\mathcal{P}(x,y)$ of $k$ internally pairwise disjoint $x$-$y$-paths in $G$.
Without loss of generality we can assume that each such path is induced.

\begin{claim}\label{clm:isolateTerminals}
There is a subset $B' \subseteq B$ of size $k'=k/2$ with the following property. 
For each pair of distinct vertices $x,y \in B'$ there is a set $\mathcal{P'}(x,y)$ of $k'$ internally pairwise disjoint induced $x$-$y$-paths in $G$ that do not contain any vertices from $B' \setminus \{x,y\}$.
\end{claim}
\begin{claimproof}
Let $B'$ be any set of $k'$ vertices from $B$.
Consider any two distinct vertices $x,y \in B'$.
As the interiors of paths in $\mathcal{P}(x,y)$ are pairwise disjoint,
we observe that at most $k'$ paths might contain vertices from $B' \setminus \{x,y\}$.
Thus $\mathcal{P}(x,y)$ contains at least $k - k' = k'$ paths that do not intersect $B' \setminus \{x,y\}$.
\end{claimproof}

Fix a pair $\{x,y\}$ of distinct vertices from $B'$.
A path in $\mathcal{P'}(x,y)$ is \emph{long} if it has at least $t$ internal vertices.
The pair $\{x,y\}$ is \emph{distant} if at least $\ell$ paths in $\mathcal{P'}(x,y)$ are long.

Let $Q \subseteq B'$ be a minimum-size subset of $B'$ that intersects all distant pairs.
(It might be useful to think of $Q$ as the minimum vertex cover in a graph with vertex set $B'$ where the edges correspond to distant pairs.)
We consider two cases, depending whether $Q$ is ``large'' or ``small''.

\paragraph{\boldmath Case 1 (large $Q$): $|Q| \geq q$.}
In this case there is a set consisting of at least $|Q|/2 \geq  \mathsf{Ram}(d,c(3t+1,s))$  pairwise disjoint distant pairs (i.e., a matching in the mentioned graph).
Let $M$ be such a set of size exactly $\mathsf{Ram}(d,c(3t+1,s))$.

\begin{claim}\label{clm:extractWideClaw}
For each $\{x,y\} \in M$, there is a set $S^{\{x,y\}}$ contained in the union of paths in $\mathcal{P'}(x,y)$ that induces $S(p,t)$ in $G$.
\end{claim}
\begin{claimproof}
Fix $\{x,y\} \in M$. 
For a long path $P \in \mathcal{P'}(x,y)$, let $\mathsf{prefix}(P)$ be the set of $t$ first vertices of $P$, starting from the side of $x$, but excluding $x$ itself. So $\mathsf{prefix}(P)$ induces a $t$-vertex path, and $\mathsf{prefix}(P) \cup \{x\}$ induces a $(t+1)$-vertex path in $G$.

Fix any set $\mathcal{P'}_{\text{long}}$ of $\ell$ long paths from $\mathcal{P'}(x,y)$, and define $Z = \{ \mathsf{prefix}(P) ~|~ P \in \mathcal{P'}_{\text{long}} \}$.
Consider an auxiliary graph $\mathbf{Z}$ with vertex set $Z$, in which two elements are adjacent if and only if there is an edge between one set and the other (we emphasize here that the sets in $Z$ are pairwise disjoint).

If $\mathbf{Z}$ contains a clique of size $c(t,s)$, then, by~\cref{lem:lozinrazgon1}, we obtain a subgraph isomorphic to $K_{s,s}$ in $G$, a contradiction.
Thus, since $|Z| = \ell =  \mathsf{Ram}(p,c(t,s))$ we obtain an independent set of size $p$ in $\mathbf{Z}$. 
The $p$ paths forming this independent set, together with $x$, induce the desired copy of $S(p,t)$.
\end{claimproof}

Observe that even though the pairs in $M$ are pairwise disjoint, the sets $S^{\{x,y\}}$ might still intersect.
In the next claim we extract from them induced copies of $S_{t,t,t}$ that are pairwise disjoint.

\begin{claim}\label{clm:extractDisjointClaws}
For each $\{x,y\} \in M$ there is an induced copy of $S_{t,t,t}$ contained in $S^{\{x,y\}}$,
such that for any distinct $\{x,y\}, \{x',y'\} \in M'$ the corresponding copies of $S_{t,t,t}$ are disjoint.
\end{claim}
\begin{claimproof}
Fix an arbitrary order $\{x_1,y_1\},\ldots,\{x_{|M|},y_{|M|}\}$ on pairs in $M$. The proof is by induction.

The copy of $S_{t,t,t}$ for $\{x_1,y_1\}$ can be selected by picking any three out of $p$ paths in the copy of $S(p,t)$ given by~\cref{clm:extractWideClaw}.
So let $i \in [1,|M|-1]$ and suppose that for each $1 \leq j \leq i$ we have selected an induced copy $S^j$ of $S_{t,t,t}$ contained in $S^{\{x_j,y_j\}}$.
Note that in total we have selected $i \cdot (3t+1)$ vertices.
Consider the pair $\{x_{i+1},y_{i+1}\}$ and let $P^1,\ldots,P^p$ be the $t$-vertex paths obtained from $S^{\{x_{i+1},y_{i+1}\}}$ by deleting $x_{i+1}$.
Note that at most $i(3t+1) \leq (|M|-1)(3t+1)$ of these paths might intersect $S^1 \cup \ldots \cup S^i$. 
So, as $p = (|M|-1)(3t+1)+3$, there is always a choice of three paths that are disjoint with $S^1 \cup \ldots \cup S^i$.
Recall that by \cref{clm:isolateTerminals} the vertex $x_{i+1}$ is not contained $\bigcup_{1 \leq j \leq i} S^j$.
Thus the vertices of the three selected paths, together with $x_{i+1}$, form the set $S^{i+1}$.
\end{claimproof}

Now we proceed similarly as in the proof of \cref{clm:extractWideClaw}.
Let $S$ be the family consisting of induced copies of $S_{t,t,t}$ for all $\{x,y\} \in M$; they are given by \cref{clm:extractDisjointClaws}.
Let $\mathbf{S}$ be the graph with vertex set $S$ where two vertices are adjacent if and only if there is an edge from one set to another.
Note that $|S| = \mathsf{Ram}(d,c(3t+1,s))$. 
An independent set in $\mathbf{S}$ of size $d$ corresponds to an induced $dS_{t,t,t}$ in $G$.
On the other hand, if $\mathbf{S}$ has a clique of size at least $c(3t+1,s)$, then by \cref{lem:lozinrazgon1} we obtain $K_{s,s}$ as a subgraph of $G$. By the choice of $|S|$ one of these cases must happen, and both yield a contradiction. Thus we conclude that Case 1 cannot occur.

\paragraph{\boldmath Case 2 (small $Q$): $Q < q$.}
Define $B'' = B' \setminus Q$; note that $|B''| \geq k' - q \geq r$.
For any distinct $x,y \in B''$, let $\mathcal{P}''(x,y)$ be obtained from $\mathcal{P}'(x,y)$ by removing all long paths.
By the definition of $Q$ we observe that for all $x,y$ we have $|\mathcal{P}''(x,y)| \geq k' - \ell \geq t \binom{r}{2}$.

Let $R$ be any $r$-element subset of $B''$.

\begin{claim}\label{clm:pathsinclique}
For each pair $\{x,y\}$ of distinct vertices of $R$ there is a path $P^{\{x,y\}} \in \mathcal{P''}(x,y)$ such the paths selected for distinct pairs are internally disjoint.
\end{claim}
\begin{claimproof}
The proof is similar to the proof of~\cref{clm:extractDisjointClaws}. We use induction.
Enumerate pairs of distinct vertices from $R$ as $\{x_1,y_1\},\ldots,\{x_{\binom{r}{2}},y_{\binom{r}{2}}\}$.

The path $P^{\{x_1,y_1\}}$ can be arbitrarily chosen from $\mathcal{P''}(x_1,y_1)$.
Suppose that we have selected paths $P^{\{x_1,y_1\}},\ldots,P^{\{x_i,y_i\}}$ for some $1 \leq i < \binom{r}{2}$.
Since each path is short, the selected paths have in total at most $i \cdot t < t\binom{r}{2}$ internal vertices.

Now consider the set $\mathcal{P''}(x_{i+1},y_{i+1})$. Recall that the paths in this set are pairwise disjoint. Since $|\mathcal{P''}(x_{i+1},y_{i+1})| \geq t \binom{r}{2}$, we observe that there is a path in $\mathcal{P''}(x_{i+1},y_{i+1})$ which is internally disjoint with all previously selected paths. We pick this path as $P^{\{x_{i+1},y_{i+1}\}}$.
\end{claimproof}

Recall that for each distinct $x,y \in R$, the path $P^{\{x,y\}}$ does not contain vertices from $R \setminus \{x,y\}$.
Thus the set $R$ with the paths given by \cref{clm:pathsinclique} forms a $(\leq t)$-subdivision of $K_r$ which is a subgraph of $G$.
This concludes the claim.
\end{proof}

Now we can proceed to the proof of~\cref{lem:lozinrazgon}.

\begin{proof}[of~\cref{lem:lozinrazgon}.]
Define $h = d(3t+1)$ (i.e., the number of vertices of $dS_{t,t,t}$).
Define $r = m(h,t)$, where $m$ is as in~\cref{lem:lozinrazgon2}, and let $k = k(d,t,r,s)$ be as in \cref{lem:lozinrazgonMain}.
For contradiction, suppose that $G$ has a $k$-block.
By~\cref{lem:lozinrazgonMain} we conclude that $G$ contains a $(\leq t)$-subdivision of $K_r$ as a subgraph.
By~\cref{lem:lozinrazgon2} we observe that $G$ contains a proper $(\leq t)$-subdivision of $K_{h,h}$ as an induced subgraph.

It is straightforward to verify that this induced subgraph contains a subdivision of $dS_{t,t,t}$ (in fact, of any graph with at most $h$ vertices and at most $h$ edges) as an induced subgraph. Therefore $G$ contains an induced subgraph isomorphic to $dS_{t,t,t}$, a contradiction.
\end{proof}

\section{Appendix: Proof of Lemma~\ref{lem:esd-border-mwis}}\label{app:matching}
Iterate over every $I_T \subseteq T$.
For fixed $I_T$, we aim at computing $f_{G,\weight,T}(I_T)$.
If $I_T$ is not independent, we set $f_{G,\weight,T}(I_T) = -\infty$.
In the remainder of the proof, we show how to compute in polynomial time
the value $f_{G,\weight,T}(I_T)$ for fixed independent $I_T \subseteq T$.

For a particle $A$ of $(H,\eta)$, let $a(A) := f_{G[A],\weight,T \cap A}(I_T \cap A)$
and let $I(A)$ be an independent set witnessing this value, that is, an independent set 
in $G[A]$ of weight $\weight(a(A))$ with $I(A) \cap T \cap A = I_T \cap A$. 
Note that as $I_T$ is independent, the value $a(A)$ is not equal to $-\infty$ and such an independent
set exists.

We say that $x \in V(H)$ is \emph{forced} if $I_T \cap \bigcup_{y \in N_H(x)} \eta(xy,x) \neq \emptyset$.
Note that since $I_T$ is independent, if $x$ is forced, then $\eta(xy,x) \cap I_T \neq \emptyset$ for exactly one edge $xy$ incident with $x$.
We call such an edge $xy$ the \emph{enforcer} of $x$. Note that an edge $xy$ may be the enforcer of both $x$ and $y$.

The arguments now follow very closely the outline of Section~3.3 of~\cite{DBLP:journals/corr/abs-1907-04585}.

We construct a set $\mathcal{P}$ of particles and an edge-weighted graph $(H',\weight')$ as follows.
We start with $\mathcal{P} = \emptyset$, $V(H') = V(H)$, and $E(H') = \emptyset$.

For every $x \in V(H)$ that is not forced, add $A_x$ to $\mathcal{P}$.
For every $xyz \in T(H)$ such that neither of the edges $xy$, $yz$, or $xz$ is the enforcer of \emph{both} its endpoints, add $A_{xyz}$ to $\mathcal{P}$.
For every $e = xy \in E(H)$, proceed as follows. 
\begin{enumerate}
\item If neither $x$ nor $y$ is forced, we add to $H'$ a new vertex $t_e$
and edges $t_ex$, $t_ey$, $xy$, and set the edge weights $\weight'$ as follows:
\begin{align*}
\weight'(t_ex) &:= a(A_{xy}^x) - a(A_{xy}^\bot) - a(A_x),\\
\weight'(t_ey) &:= a(A_{xy}^y) - a(A_{xy}^\bot) - a(A_y),\\
\weight'(xy) &:= a(A_{xy}^{xy}) - a(A_{xy}^\bot) - a(A_x) - a(A_y) - \sum_{z, \text{ s.t. }xyz \in T(H)} a(A_{xyz}).
\end{align*}
Furthermore, add $A_{xy}^\bot$ to $\mathcal{P}$. 
\item If exactly one of $x$ and $y$ is forced, say w.l.o.g. $x$ is forced and $y$ is not forced, proceed as follows.
\begin{enumerate}
\item If $xy$ is the enforcer of $x$, then add to $H'$ an edge $xy$ with weight
\[ \weight'(xy) := a(A_{xy}^{xy}) - a(A_{xy}^x) - a(A_y) - \sum_{z, \text{ s.t. }xyz \in T(H)} a(A_{xyz}).\]
Furthermore, add $A_{xy}^x$ to $\mathcal{P}$. 
\item If $xy$ is not the enforcer of $x$, then add to $H'$ a new vertex $t_e$ and an edge $t_ey$ with weight
\[ \weight'(t_ey) := a(A_{xy}^y) - a(A_{xy}^\bot) - a(A_y). \]
Furthermore, add $A_{xy}^\bot$ to $\mathcal{P}$.
\end{enumerate}
\item If both $x$ and $y$ are forced, proceed as follows.
\begin{enumerate}
\item If $xy$ is neither the enforcer of $x$ nor of $y$, add $A_{xy}^\bot$ to $\mathcal{P}$.
\item If $xy$ is the enforcer of $x$ but not of $y$ add $A_{xy}^x$ to $\mathcal{P}$.
\item If $xy$ is the enforcer of $y$ but not of $x$ add $A_{xy}^y$ to $\mathcal{P}$.
\item If $xy$ is the enforcer of both $x$ and $y$, add $A_{xy}^{xy}$ to $\mathcal{P}$.
\end{enumerate}
\end{enumerate}

This finishes the description of the construction of $\mathcal{P}$ and $(H',\weight')$.
In the next two paragraphs we make two observations that follow by a direct check from the definitions.

Observe that $I_0 := \bigcup_{A \in \mathcal{P}} I(A)$ is independent in $G$ 
and has weight $a_0 := \sum_{A \in \mathcal{P}} a(A)$. 
Furthermore, for every $A \in \mathcal{P}$, we have $I_0 \cap A \cap T = I_T \cap A$
and $I_0 \cap T = I_T$.
We think of $I_0$ as the ``base'' solution for $f_{G,\weight,T}(I_T)$.

Observe also that all weights $\weight'$ of $H'$ are nonnegative, as $A_{xy}^x$ contains both $A_{xy}^\bot$ and $A_x$
while $A_{xy}^{xy}$ contains $A_{xy}^\bot$, $A_x$, $A_y$, as well as all $A_{xyz}$ for 
all triangles $xyz$ containing the edge $xy$. 

We will be asking for a maximum weight matching in $(H',\weight')$. 
Intuitively, taking an edge $t_ex$ to such a matching corresponds to replacing 
in $I_0$ the parts $I(A_{xy}^\bot)$ and $I(A_x)$ with the part $I(A_{xy}^x)$
while taking an edge $xy$ to such a matching corresponds to replacing
in $I_0$ the parts $I(A_{xy}^\bot)$, $I(A_x)$, $I(A_y)$ and all parts $I(A_{xyz})$ for triangles
$xyz$ containing the edge $xy$ with part $I(A_{xy}^{xy})$. 

From another perspective, fix $x \in V(H)$ and recall that the sets $\eta(xy,x)$ for $y \in N_H(x)$
are complete to each other. Hence, any independent set in $G$ can contain vertices in at most
one of such sets. For an edge $e = xy \in E(H)$, taking an edge $xy$ or $t_ex$ in a matching
in $H'$ corresponds to choosing that, among all neighbors of $x$ in $H$, the neighbor $y$
is such that the set $\eta(xy,x)$ is allowed to contain vertices of the sought independent set.
(Choosing $xy \in E(H')$ to the matching corresponds to allowing both $\eta(xy,x)$ and $\eta(yx,y)$
 to contain vertices of the sought independent set.)

However, there is a delicacy if $I_T$ contains a vertex of some interface $\eta(xy,x)$.
Then, in some sense $I_T$ already forces some choices in the corresponding matching in $H'$.
This is modeled above by having alternate construction for vertices $x \in V(H)$ that are forced.

The following two claims prove that $f_{G,\weight,T}(I_T)$ equals $a_0$ plus
the maximum possible weight of a matching in $(H',\weight')$ and thus
complete the proof of~\cref{lem:esd-border-mwis}.
Their proofs follow exactly the lines of the proofs of Claims~3.7 and~3.8 of Section~3.3
of~\cite{DBLP:journals/corr/abs-1907-04585} and are thus omitted.

\begin{claim}
Let $I$ be an independent set in $G$ with $I \cap T = I_T$. 
Let $M$ be the set of edges of $H'$ defined as follows: 
for every $e = xy \in E(H)$,
 if $\eta(xy,x) \cap I \neq \emptyset$ and $\eta(xy,y) \cap I \neq \emptyset$, then $xy \in M$,
 if $\eta(xy,x) \cap I \neq \emptyset$ and $\eta(xy,y) \cap I = \emptyset$, then $t_ex \in M$,
and 
 if $\eta(xy,x) \cap I = \emptyset$ and $\eta(xy,y) \cap I \neq \emptyset$, then $t_ey \in M$.
Then, all the above edges indeed exist in $H'$ and  $M$ is a matching.
Furthermore, the weight of $I$ is at most $a_0 + \sum_{e \in M} \weight'(e)$.
\end{claim}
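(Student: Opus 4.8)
The plan is to define the matching $M$ directly from the independent set $I$, show it is a well-defined matching in $H'$, and then verify that replacing the relevant base pieces by the pieces selected by $M$ yields a lower bound on $\weight'(M)$, hence an upper bound on $\weight(I)$. First I would check that every edge listed in the definition of $M$ actually exists in $H'$. For a fixed $e=xy\in E(H)$, the key point is that since $I\cap T=I_T$, a nonempty $\eta(xy,x)\cap I$ means either $x$ is not forced, or $x$ is forced with enforcer $xy$ (it cannot be forced by a different edge, since the sets $\eta(xy',x)$ are pairwise complete in $G$ and $I$ is independent). Going through the case distinction in the construction (cases 1, 2(a), 2(b), 3(a)--(d)), one sees that in exactly the situations where $M$ would contain $xy$, $t_ex$, or $t_ey$, the corresponding edge was indeed added to $H'$; the cases where no edge is added to $H'$ for $e$ (e.g.\ 2(b), 3(a)) are exactly those where the corresponding $\eta(xy,\cdot)\cap I$ are empty by forcedness, so $M$ selects no edge of $e$ either. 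This is the step I expect to be the most tedious bookkeeping, though not conceptually hard.

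Next I would argue $M$ is a matching: the only way two edges of $M$ could share a vertex is if they share a vertex $x\in V(H)$, i.e.\ $\eta(xy,x)\cap I\neq\emptyset$ and $\eta(xz,x)\cap I\neq\emptyset$ for two distinct neighbors $y,z$ of $x$. But $\eta(xy,x)$ is complete to $\eta(xz,x)$ in $G$ (property 2 of extended strip decompositions), contradicting independence of $I$. Vertices $t_e$ have degree at most one in anything $M$ could pick for $e$, so no conflict arises there.

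For the weight bound, I would partition $V(G)$ according to the particles of $(H,\eta)$ and compare $I$ with the base solution $I_0$. For each object $o\in V(H)\cup E(H)\cup T(H)$, the part $I\cap\eta(o)$ sits inside some particle $A$ of $(H,\eta)$, and $\weight(I\cap\eta(o))\le a(A)$ whenever $I\cap T\cap A = I_T\cap A$, which holds because $I\cap T=I_T$. The accounting is organized so that: a vertex particle $A_x$ with $x$ not forced and not covered by $M$ contributes $a(A_x)$, already counted in $a_0$; an edge $e=xy$ not covered by $M$ contributes $a(A_{xy}^\bot)$ (or $a(A_{xy}^x)$, etc., in the forced cases), again inside $a_0$; and an edge $e$ covered by $M$ — say by $t_ex$ — "upgrades" the pieces $A_{xy}^\bot, A_x$ to $A_{xy}^x$, and the excess $a(A_{xy}^x)-a(A_{xy}^\bot)-a(A_x)=\weight'(t_ex)$ is exactly the edge weight in $H'$; similarly an edge $xy\in M$ upgrades $A_{xy}^\bot,A_x,A_y$ and all incident triangle particles $A_{xyz}$ to $A_{xy}^{xy}$, with excess $\weight'(xy)$. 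The forced cases are handled analogously with $A_{xy}^x$ or $A_{xy}^{xy}$ playing the role of the base piece and the $\weight'$ values defined accordingly. Since every vertex of $G$ lies in exactly one $\eta(o)$, and since $M$ being a matching guarantees these upgrades do not overlap (no particle is "used" by two matching edges), summing $\weight(I\cap\eta(o))\le a(\cdot)$ over all objects and reorganizing gives $\weight(I)\le a_0+\sum_{e\in M}\weight'(e)$. The main obstacle is making sure the upgrade accounting is consistent across all the forced/unforced subcases — in particular that each triangle particle $A_{xyz}$ is charged to at most one matching edge, which follows from the rule that $A_{xyz}$ is placed in $\mathcal{P}$ precisely when no side of $xyz$ is the enforcer of both its endpoints, so the only way it gets "upgraded" is via a single chosen full-edge particle. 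As noted, the detailed verification mirrors Claim~3.7 of~\cite{DBLP:journals/corr/abs-1907-04585} and I would defer to that argument for the routine parts.
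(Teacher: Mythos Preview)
Your proposal is correct and follows exactly the approach the paper intends: the paper itself omits the proof of this claim entirely, stating that it ``follows exactly the lines of the proofs of Claims~3.7 and~3.8 of Section~3.3 of~\cite{DBLP:journals/corr/abs-1907-04585},'' and your sketch reproduces precisely that standard argument (existence of the claimed edges via the forced/enforcer case analysis, matching property via pairwise completeness of the interfaces $\eta(xy,x)$, and the weight bound via the ``upgrade'' accounting against the base family $\mathcal{P}$). Your observation that the forced subcases require separate bookkeeping --- in particular that in case~2(a) the base particle is already $A_{xy}^x$ so no $t_ex$ edge is needed, and that each triangle particle is charged to at most one matching edge --- is exactly the care the cited argument takes.
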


\begin{claim}
Let $M$ be a matching in $H'$. 
Let $\mathcal{P}_M$ be the set of particles of $(H,\eta)$ defined as follows.
Start with $\mathcal{P}_M := \mathcal{P}$ and then for every edge $e = xy \in E(H)$, 
  \begin{itemize}
  \item if $xy \in M$, insert $A_{xy}^{xy}$ into $\mathcal{P}_M$ 
  and remove from $\mathcal{P}_M$ the following particles if present: $A_{xy}^x$, $A_{xy}^y$, $A_{xy}^\bot$, $A_x$, $A_y$, $A_{xyz}$ for any $z \in V(H)$ such that $xyz \in T(H)$.
  \item if $t_ex \in M$ (resp. $t_ey \in M$), insert $A_{xy}^x$ (resp. $A_{xy}^y$) into $\mathcal{P}_M$,
  and remove from $\mathcal{P}_M$ the following particles if present: $A_{xy}^\bot$ and $A_x$ (resp. $A_y$).
  \end{itemize}
Then $I_M := \bigcup_{A \in \mathcal{P}_M}$ is an independent set in $G$ with $I_M \cap T = I_T$
and of weight at least $a_0 + \sum_{e \in M} \weight'(e)$. 
\end{claim}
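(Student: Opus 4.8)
The plan is to show that the family $\mathcal{P}_M$ is well defined and consists of pairwise disjoint particles of $(H,\eta)$, to pin down for every $o\in V(H)\cup E(H)\cup T(H)$ which particle of $\mathcal{P}_M$ ``owns'' the part of the cell $\eta(o)$ that meets $\bigcup\mathcal{P}_M$, and then to read off the three asserted properties from this structure together with Property~3 in the definition of an extended strip decomposition.

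First I would check that the rewriting rules defining $\mathcal{P}_M$ are non-conflicting. Recall $V(H')=V(H)$ together with the subdivision vertices $t_e$. Since $M$ is a matching, for every $x\in V(H)$ at most one edge of $H'$ incident to $x$ lies in $M$, at most one edge incident to any $t_e$ lies in $M$, and among the pairwise-adjacent edges $xy,yz,xz$ of a triangle of $H$ at most one is in $M$. Consequently each $A_x$ is deleted by at most one rule, each of $A_{xy}^\bot,A_{xy}^x,A_{xy}^y$ is affected only by the (at most one) rule attached to the edge $xy$, and each $A_{xyz}$ is deleted by at most one rule. A short case check --- following the cases (forced/not forced, which edge is the enforcer) used to build $\mathcal{P}$ and $H'$ --- shows that for $xy\in M$ the list ``remove $A_{xy}^x,A_{xy}^y,A_{xy}^\bot,A_x,A_y$ and all $A_{xyz}$; add $A_{xy}^{xy}$'' (and the analogous list for $t_ex\in M$) always removes exactly those particles of $\mathcal{P}$ that are contained in the newly inserted particle, and never removes a particle that owns a cell the new particle fails to cover. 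This yields: (F1) the particles of $\mathcal{P}_M$ are pairwise disjoint; (F2) for every $o$ the set $\eta(o)\cap\bigcup\mathcal{P}_M$ lies in a single particle of $\mathcal{P}_M$, which moreover contains all of $\eta(x)$ when $o=x\in V(H)$ and all of $\eta(xyz)$ when $o=xyz\in T(H)$; (F3) for every $x\in V(H)$ at most one of the sets $\eta(xy,x)$ with $y\in N_H(x)$ meets $\bigcup\mathcal{P}_M$ (here one uses that a forced $x$ has a unique enforcer and that, by the matching property, at most one activating move is performed at $x$).

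With F1--F3 the conclusions follow quickly. \emph{Weight.} By F1, $\weight(I_M)=\sum_{A\in\mathcal{P}_M}\weight(I(A))=\sum_{A\in\mathcal{P}_M}a(A)$; comparing with $a_0=\sum_{A\in\mathcal{P}}a(A)$, each $e\in M$ contributes exactly the weight of the particle it inserts minus the weights of the particles it deletes, which is precisely $\weight'(e)$ by the defining formulas, so $\weight(I_M)=a_0+\sum_{e\in M}\weight'(e)$ (in particular at least that). \emph{Terminals.} Since $I_T\subseteq T$, every vertex of $I_T$ lies in a cell $\eta(o)$ whose owning particle is in $\mathcal{P}$ and survives into $\mathcal{P}_M$; the only delicate point is a vertex of $I_T$ inside some $\eta(xy,x)$, which forces $x$ with enforcer $xy$ and hence puts $A_{xy}^x$ or $A_{xy}^{xy}$ into $\mathcal{P}$, and any rule attached to $xy$ replaces it by $A_{xy}^{xy}$, still covering the vertex. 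Thus $I_T\subseteq\bigcup\mathcal{P}_M$, and then $I_M\cap T=\bigcup_{A\in\mathcal{P}_M}(I(A)\cap T\cap A)=\bigcup_{A\in\mathcal{P}_M}(I_T\cap A)=I_T$ by F1. \emph{Independence.} Any edge of $G$ inside one cell is inside one particle by F2, and each $I(A)$ is independent; so suppose $uv\in E(G)$ is one of the three inter-cell types of Property~3 with $u,v\in I_M$. A type-(i) edge ($u\in\eta(xy,x),v\in\eta(xz,x)$, $y\neq z$) would make two distinct sets $\eta(x\cdot,x)$ meet $\bigcup\mathcal{P}_M$, contradicting F3. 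A type-(ii) edge ($u\in\eta(xy,x),v\in\eta(x)$): $u\in\bigcup\mathcal{P}_M$ forces $A_{xy}^x$ or $A_{xy}^{xy}$ into $\mathcal{P}_M$, and one checks the owner of $\eta(x)$ is then exactly that particle, so $u$ and $v$ lie in a common particle. A type-(iii) edge ($u\in\eta(xyz),v\in\eta(xy,x)\cap\eta(xy,y)$): the only particle containing $\eta(xy,x)\cap\eta(xy,y)$ is $A_{xy}^{xy}$, so $v\in I_M$ forces $A_{xy}^{xy}\in\mathcal{P}_M$, and $A_{xy}^{xy}\supseteq\eta(xyz)\ni u$. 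In every case $u,v$ lie in one particle $A$, contradicting independence of $I(A)$. Hence $I_M$ is independent.

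The main obstacle is precisely the bookkeeping of the first step: verifying across all cases of the construction of $\mathcal{P}$ and $H'$ that the rewriting rules are mutually consistent and produce the clean ownership statements F1--F3 --- the matching property of $M$ is exactly what makes these local rewritings commute. Once F1--F3 are in place the rest is routine, and this argument follows line for line the proofs of Claims~3.7 and~3.8 of~\cite{DBLP:journals/corr/abs-1907-04585}.
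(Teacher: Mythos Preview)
Your proposal is correct and follows the same approach that the paper intends: the paper omits this proof entirely, stating only that it ``follows exactly the lines of the proofs of Claims~3.7 and~3.8 of Section~3.3 of~\cite{DBLP:journals/corr/abs-1907-04585},'' and you reconstruct precisely that argument --- disjointness of $\mathcal{P}_M$, the per-cell ownership analysis (your F1--F3), and the case check over the three inter-cell edge types from the definition of an extended strip decomposition. The weight identity $\weight(I_M)=a_0+\sum_{e\in M}\weight'(e)$ (not just inequality) that you derive is indeed what the telescoping gives once one verifies, as you indicate, that for each $e\in M$ the particles actually present in $\mathcal{P}$ among those listed for removal are exactly the ones appearing with a minus sign in the formula for $\weight'(e)$.
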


\end{document}